\documentclass{article}
\usepackage{amsfonts}
\usepackage{mathrsfs}
\usepackage{amsthm}
\usepackage{amssymb}
\usepackage{amsmath}
\usepackage{enumerate}

\textwidth 148mm \textheight 230mm
\setlength{\oddsidemargin}{0.636cm}
\setlength{\evensidemargin}{0.636cm}
\topmargin 0pt

\theoremstyle{plain}
\newtheorem{theorem}{Theorem}[section]
\newtheorem{proposition}[theorem]{Proposition}
\newtheorem{corollary}[theorem]{Corollary}

\theoremstyle{definition}
\newtheorem{definition}{Definition}[section]

\theoremstyle{remark}
\newtheorem{remark}{\textbf{Remark}}[section]

\theoremstyle{example}
\newtheorem{example}{Example}[section]

\numberwithin{equation}{section}

\title{The uniform measure for quantum walk on hypercube:\\
a quantum Bernoulli noises approach}
\author{Ce Wang\\
Yau Mathematical Sciences Center, Tsinghua University\\
Beijing 100084, People's Republic of China}
\date{}
\begin{document}
\maketitle

\noindent\textbf{Abstract.}\ \
In this paper, we present a quantum Bernoulli noises approach to quantum walks on hypercubes.
We first obtain an alternative description of a general hypercube and then, based on the alternative description,
we find that the operators $\partial_k^* + \partial_k$ behave actually as the shift operators,
where $\partial_k$ and $\partial_k^*$ are the annihilation and creation operators acting on Bernoulli functionals, respectively.
With the above operators as the shift operators on the position space, we introduce a discrete-time quantum walk model on a general hypercube
and obtain an explicit formula for calculating its probability distribution at any time.
We also establish two limit theorems showing that the averaged probability distribution of the walk even converges to the uniform probability distribution.
Finally, we show that the walk produces the uniform measure as its stationary measure on the hypercube
provided its initial state satisfies some mild conditions. Some other results are also proven.
\vskip 2mm

\noindent\textbf{Keywords.}\ \ Quantum walk; Quantum Bernoulli noises; Quantum probability; Stationary measure; Uniform measure
\vskip 2mm

\noindent\textbf{Mathematics Subject Classification.}\ \ 81S25; 60H40; 81Q99.

\section{Introduction}

As is known, random walks have been extensively used in modeling various physical processes and in developing random algorithms.
Recent two decades have witnessed great attention paid to quantum walks, which are quantum analogs of random walks
(see, e.g. \cite{y-aharonov, kempe-1, konno-2, portugal,venegas} and references therein).
Quantum walks are also known as quantum random walks, and have found wide application in quantum computation.
Due to the quantum interference effects, quantum walks greatly outperform random walks at certain computational tasks,
and moreover it has turned out that quantum walks constitute universal models of quantum computation \cite{portugal}.

From a perspective of mathematical physics, a quantum walk can be viewed as a quantum dynamical system (QDS) driven by a unitary operator
on the tensor space of an $l^2$-space associated with a graph and an auxiliary Hilbert space.
For instance, the well-known Hadamard walk can be thought of a QDS driven by a unitary operator on the tensor space $l^2(\mathbb{Z})\otimes \mathbb{C}^2$,
where the integer lattice $\mathbb{Z}$ is essentially a graph.
There are two basic types of quantum walks in the literature: the type of discrete-time quantum walks and the type of continuous-times quantum walks.
In this paper, we focus on the former.

Hypercubes are a special class of regular graphs, which play an important role in computer science.
Recently there has been much interest in quantum walks on hypercubes.
In their survey paper, Aharonov et al \cite{d-aharonov} presented the ground for a theory of quantum walks on graphs including hypercubes.
Moore and Russell \cite{moore} studied the instantaneous mixing time of quantum walks on the $n$-dimensional hypercube,
while Kempe \cite{kempe-2} investigated the hitting time of these walks. Alagi\'{c} and Russell \cite{alagic} considered decoherence in quantum walks on hypercubes.
There are many other works concerning quantum walks on hypercubes in the literature (see, e.g. \cite{makmal, marquezino}).

However, little attention has been paid to the problem of finding a way to produce the uniform measure on a general hypercube via a quantum walk,
which is interesting as is pointed out in \cite{konno-1}.
In this paper, as one of our main purposes, we would like to provide a solution to such problem.

Quantum Bernoulli noises (QBN) are annihilation and creation operators acting on Bernoulli functionals, which satisfy the canonical anti-commutation relations (CAR) in equal time.
It has turned out that QBN can play an important role in describing the irreversible evolution of an open quantum system (especially a Fermi system)
\cite{wang-chen, w-t-r}. In 2016, Wang and Ye introduced a quantum walk model on the $1$-dimensional integer lattice $\mathbb{Z}$ in terms of QBN
and showed its strong decoherence property \cite{w-y}.

In the present paper, as our another main purpose, we would also like to present a QBN approach to quantum walks on hypercubes. Our main work is as follows.
\begin{itemize}
  \item We obtain an alternative description of the $(n+1)$-dimensional hypercube, where $n$ is a general nonnegative integer. And based on this alternative description,
    we find that the operators $\{\partial_k^* + \partial_k \mid 0\leq k \leq n\}$ behave actually as the shift operators, where $\partial_k$ and $\partial_k^*$
    are the annihilation and creation operators acting on Bernoulli functionals.
  \item We introduce a notion of coin operator system on a general Hilbert space and obtain a necessary and sufficient condition for an operator system to be a coin operator system.
  \item With $\{\partial_k^* + \partial_k \mid 0\leq k \leq n\}$ as the shift operators on the position space, we introduce a quantum walk model on the $(n+1)$-dimensional hypercube and obtain an explicit formula for calculating the probability distribution of the walk at any time.
  \item We establish two limit theorems showing that the averaged probability distribution of the walk even converges to the uniform probability distribution on the
     vertex set of the $(n+1)$-dimensional hypercube.
  \item Finally, we prove that the walk produces the uniform measure as its stationary measure on the $(n+1)$-dimensional hypercube
     provided its initial state satisfies some mild conditions.
\end{itemize}
Our work shows that QBN can provide a way to produce the uniform measure on a general hypercube via a quantum walk.
Moreover, our work also suggests that, for a quantum walk on a hypercube, its ``components in the coin space'' can be
the determining factors of its probability distributions and evolution behavior.

The paper is organized as follows. In Section~\ref{sec-2}, we recall some necessary notions and facts about quantum Bernoulli noises.
Our main work then lies in Section~\ref{sec-3}, where we give an alternative description of the $(n+1)$-dimensional hypercube (Subsection~\ref{subsec-3-1}),
characterize our coin operator systems (Subsection~\ref{subsec-3-2}), define our quantum walk model (Subsection~\ref{subsec-3-3}),
examine the probability distribution of the walk (Subsection~\ref{subsec-3-4}), prove that the walk can produce the uniform measure as its stationary measure
(Subsection~\ref{subsec-3-5}), and offer some examples (Subsection~\ref{subsec-example}). Finally in Section~\ref{sec-4}, we make some conclusion remarks.

%Quantum walks on the integer lattice $\mathbb{Z}$ have been intensively investigated and their evolution properties have been well understood
%(see \cite{konno-1} and references therein). One typical result in this aspect is that a quantum walk on the integer lattice $\mathbb{Z}$
%usually spreads ballistically and its limit probability distribution usually belongs to the category of Konno's distributions,
%which are quite different from Gaussian ones.

\vskip 2mm

\textbf{Conventions.}
Throughout this paper, by a Hilbert space we means a separable complex Hilbert space whose inner product is conjugate linear in the first variable
and linear in the second variable. If $A$ is an bounded operator on a Hilbert space, then $A^*$ denotes its adjoint.
For Hilbert spaces $\mathcal{H}_1$ and $\mathcal{H}_2$, their tensor space is written as $\mathcal{H}_1\otimes \mathcal{H}_2$.
By convention, $\dim \mathcal{H}$ means the dimension of a Hilbert space $\mathcal{H}$.

\section{Quantum Bernoulli noises}\label{sec-2}

Let $\Omega$ be the set of all functions $\omega\colon \mathbb{N} \mapsto \{-1,1\}$, and
$(\zeta_n)_{n\geq 0}$ the sequence of canonical projections on $\Omega$ given by
\begin{equation}\label{eq-2-1}
    \zeta_n(\omega)=\omega(n),\quad \omega\in \Omega.
\end{equation}
Let $\mathscr{F}$ be the $\sigma$-field on $\Omega$ generated by the sequence $(\zeta_n)_{n\geq 0}$,
and $(p_n)_{n\geq 0}$ a given sequence of positive numbers with the property that $0 < p_n < 1$ for all $n\geq 0$.
Then there exists a unique probability measure $\mathbb{P}$ on the measurable space $(\Omega,\mathscr{F})$ such that
\begin{equation}\label{eq-2-2}
\mathbb{P}\circ(\zeta_{n_1}, \zeta_{n_2}, \cdots, \zeta_{n_k})^{-1}\big\{(\epsilon_1, \epsilon_2, \cdots, \epsilon_k)\big\}
=\prod_{j=1}^k p_{n_j}^{\frac{1+\epsilon_j}{2}}(1-p_{n_j})^{\frac{1-\epsilon_j}{2}}
\end{equation}
for $n_j\in \mathbb{N}$, $\epsilon_j\in \{-1,1\}$\  $(1\leq j \leq k)$\ with $n_i\neq n_j$ when $i\neq j$
and $k\geq 1$. Thus one has a probability measure space $(\Omega, \mathscr{F}, \mathbb{P})$,
which is referred to as the Bernoulli space and complex-valued random variables on it are known as Bernoulli functionals.

Let $Z=(Z_n)_{n\geq 0}$ be the sequence of Bernoulli functionals generated by the sequence $(\zeta_n)_{n\geq 0}$, namely
\begin{equation}\label{eq-2-3}
   Z_n = \frac{\zeta_n + q_n - p_n}{2\sqrt{p_nq_n}},\quad n\geq0,
\end{equation}
where $q_n = 1-p_n$. Clearly $Z=(Z_n)_{n\geq 0}$ is a sequence of independent random variables on the
probability measure space $(\Omega, \mathscr{F}, \mathbb{P})$.
Let $\mathfrak{h}$ be the space of square integrable Bernoulli functionals, namely
\begin{equation}\label{eq-2-4}
  \mathfrak{h} = L^2(\Omega, \mathscr{F}, \mathbb{P}).
\end{equation}
We denote by
$\langle\cdot,\cdot\rangle_{\mathfrak{h}}$ the usual inner product of the space $\mathfrak{h}$, and by $\|\cdot\|_{\mathfrak{h}}$ the corresponding norm.
It is known \cite{privault,w-c-l} that $Z$ has the chaotic representation property, which implies that the system
$\mathfrak{Z} = \{Z_{\sigma}\mid \sigma\in \Gamma\}$  forms an orthonormal basis (ONB) for $\mathfrak{h}$, where
$Z_{\emptyset}=1$ and
\begin{equation}\label{eq-2-5}
    Z_{\sigma} = \prod_{j\in \sigma}Z_j,\quad \text{$\sigma \in \Gamma$, $\sigma \neq \emptyset$},
\end{equation}
where $\Gamma = \{\sigma \subset \mathbb{N} \mid \#\sigma<\infty\}$. The ONB $\mathfrak{Z} = \{Z_{\sigma}\mid \sigma\in \Gamma\}$ is known as the canonical ONB for $\mathfrak{h}$.
Clearly $\mathfrak{h}$ is separable and infinite-dimensional as a complex Hilbert space.

It can be shown that \cite{w-c-l}, for each $k\geq 0$, there exists a bounded operator $\partial_k$ on
$\mathfrak{h}$ such that
\begin{equation}\label{eq-2-6}
    \partial_k Z_{\sigma} = \mathbf{1}_{\sigma}(k)Z_{\sigma\setminus k},\quad \partial_k^{\ast} Z_{\sigma}
    = [1-\mathbf{1}_{\sigma}(k)]Z_{\sigma\cup k}\quad
    \sigma \in \Gamma,
\end{equation}
where $\partial_k^{\ast}$ denotes the adjoint of $\partial_k$, $\sigma\setminus k=\sigma\setminus \{k\}$, $\sigma\cup k=\sigma\cup \{k\}$
and $\mathbf{1}_{\sigma}(k)$ the indicator of $\sigma$ as a subset of $\mathbb{N}$.

The operators $\partial_k$ and $\partial_k^{\ast}$ are usually known as the annihilation and
creation operators acting on Bernoulli functionals, respectively.
And the family $\{\partial_k, \partial_k^{\ast}\}_{k \geq 0}$ is referred to as quantum Bernoulli noises (QBN).

A typical property of QBN is that they satisfy
the canonical anti-commutation relations (CAR) in equal-time \cite{w-c-l}.
More specifically, for $k$, $l\geq 0$, it holds true that
\begin{equation}\label{eq-2-7}
    \partial_k \partial_l = \partial_l\partial_k,\quad
    \partial_k^{\ast} \partial_l^{\ast} = \partial_l^{\ast}\partial_k^{\ast},\quad
    \partial_k^{\ast} \partial_l = \partial_l\partial_k^{\ast}\quad (k\neq l)
\end{equation}
and
\begin{equation}\label{eq-2-8}
   \partial_k\partial_k= \partial_k^{\ast}\partial_k^{\ast}=0,\quad
   \partial_k\partial_k^{\ast} + \partial_k^{\ast}\partial_k=I,
\end{equation}
where $I$ is the identity operator on $\mathfrak{h}$.

\section{QBN approach to quantum walk on hypercube}\label{sec-3}

In this section, we present our main work. We first give an alternative description of a hypercube,
and then define and characterize our coin operator systems. Based on these, we introduce our quantum walk model
on a general hypercube and examine its properties both from a perspective of pure mathematics and from a perspective of probability theory.
In particular, we show that our quantum walk can produce the uniform measure as its stationary measure.

\subsection{Alternative description of hypercube}\label{subsec-3-1}

In this subsection, we present an alternative description of a hypercube. To do so, we first recall some general
notions and notation about a graph.

A (simple) graph is a pair $(V,E)$, where $V$ is a nonempty set and known as the vertex set of the graph,
while $E$ is a subset of the set $\{e \mid e \subset V,\, \#e =2\}$ and known as the edge set of the graph, where $\#e$ means the cardinality of
$e$ as a subset of $V$. For vertices $v_1$, $v_2\in V$, if $\{v_1, v_2\}\subset E$, then it is said that
$v_1$ and $v_2$ are adjacent (there exists an edge linking $v_1$ and $v_2$), and written as $v_1 \sim v_2$.
A graph $(V,E)$ is said to be finite if its vertex set $V$ is a finite set.
For a vertex $v$ of a finite graph $(V,E)$, its degree is defined as $\mathrm{deg}(v)=\#\{v' \in V \mid v'\sim v\}$.
A finite graph $(V,E)$ is called a regular graph if $\mathrm{deg}(v)=d$ for all $v\in V$, where $d$ is some constant.
In that case, the constant $d$ is called the degree of the regular graph $(V,E)$.

We now recall the usual definition of a hypercube.
Let $n \geq 1$ be a positive integer and $V^{(n)}$ the $n$-fold Cartesian product of the set $\{0,1\}$, namely
\begin{equation}\label{eq}
  V^{(n)}=\big\{\, \mathrm{x}=(x_1,x_2,\cdots, x_n) \mid x_k \in \{0,1\},\, 1\leq k \leq n\,\big\}.
\end{equation}
The $n$-dimensional hypercube is then the graph $(V^{(n)}, E^{(n)})$ with $V^{(n)}$ being the vertex set and $E^{(n)}$ being the edge set,
where $E^{(n)}$ is given by
\begin{equation}\label{eq}
  E^{(n)}=\big\{\, \{\mathrm{x},\mathrm{y}\} \mid \mathrm{x},\mathrm{y} \in V^{(n)},\, |\mathrm{x}-\mathrm{y}|=1\, \big\}
\end{equation}
with $|\mathrm{x}-\mathrm{y}|$ denoting the Hamming distance between $\mathrm{x}$ and $\mathrm{y}$, which is given by
\begin{equation}\label{eq}
|\mathrm{x}-\mathrm{y}| =\sum_{k=1}^n |x_k-y_k|,\quad \mathrm{x}=(x_1,x_2,\cdots, x_n),\, \mathrm{y}=(y_1,y_2,\cdots, y_n).
\end{equation}
It is well known that the $n$-dimensional hypercube $(V^{(n)}, E^{(n)})$ is a regular graph and its degree is exactly $n$.
Moreover, it has $2^n$ vertices and $n\times 2^{n-1}$ edges (see, e.g. \cite{venegas} for more information  about a hypercube).

We next give an alternative description of a hypercube. For a nonnegative integer $n\geq 0$, we write $\mathbb{N}_n= \{0,1,2,\cdots, n\}$
and denote by $\Gamma_n$ its the power set, namely
\begin{equation}\label{eq}
  \Gamma_n=\{\sigma \mid \sigma\subset \mathbb{N}_n\}.
\end{equation}
For example, $\Gamma_1=\big\{\,\emptyset,\, \{0\},\, \{1\},\, \{0,1\}\,\big\}$.
Recall that, for $\sigma\in \Gamma_n$ and $k\in \mathbb{N}_n$, we use $\sigma\setminus k$ to mean $\sigma\setminus \{k\}$ for brevity.
Similarly, we use $\sigma\cup k$, $\sigma\cap k$, etc.

\begin{definition}
For nonnegative integer $n\geq 0$, we denote by $(\Gamma_n, \mathfrak{E}_n)$ the graph with $\Gamma_n$ being the vertex set and
$\mathfrak{E}_n$ being the edge set, where $\mathfrak{E}_n$ given by
\begin{equation}\label{eq}
 \mathfrak{E}_n=\big\{\, \{\sigma, \tau\} \mid \sigma, \tau\in \Gamma_n,\, \#(\sigma\bigtriangleup \tau)=1\,\big\},
\end{equation}
where $\sigma\bigtriangleup \tau = (\sigma\setminus \tau) \cup (\tau\setminus\sigma)$ and $\#(\sigma\bigtriangleup \tau)$ means the
cardinality of the set $\sigma\bigtriangleup \tau$.
\end{definition}

The following proposition shows that the graph $(\Gamma_n, \mathfrak{E}_n)$ actually belongs to the category of hypercubes.

\begin{proposition}
Let $n\geq 0$ be a nonnegative integer. Then the graph $(\Gamma_n, \mathfrak{E}_n)$ is isomorphic to the $(n+1)$-dimensional hypercube $(V^{(n+1)}, E^{(n+1)})$.
\end{proposition}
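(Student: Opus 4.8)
The plan is to exhibit an explicit graph isomorphism $\Phi\colon \Gamma_n \to V^{(n+1)}$, namely the indicator-vector map. For $\sigma \in \Gamma_n$, i.e. $\sigma \subseteq \mathbb{N}_n = \{0,1,\dots,n\}$, I would set
\begin{equation*}
  \Phi(\sigma) = \big(\mathbf{1}_{\sigma}(0),\, \mathbf{1}_{\sigma}(1),\, \dots,\, \mathbf{1}_{\sigma}(n)\big) \in \{0,1\}^{n+1} = V^{(n+1)},
\end{equation*}
after relabeling the coordinate positions $0,1,\dots,n$ as $1,2,\dots,n+1$ so as to match the definition of $V^{(n+1)}$. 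First I would check that $\Phi$ is a bijection: its inverse sends a vertex $\mathrm{x}=(x_1,\dots,x_{n+1})\in V^{(n+1)}$ to the set $\{\,k\in\mathbb{N}_n \mid x_{k+1}=1\,\}$, and one verifies directly that the two maps compose to the identity in both orders. Since moreover $\#\Gamma_n = 2^{\,n+1} = \#V^{(n+1)}$, bijectivity is automatic once either injectivity or surjectivity is in hand.

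The heart of the argument is adjacency-preservation, and the key identity I would use is that for all $\sigma,\tau\in\Gamma_n$,
\begin{equation*}
  \#(\sigma\bigtriangleup\tau) \;=\; \sum_{k=0}^{n}\big|\mathbf{1}_{\sigma}(k)-\mathbf{1}_{\tau}(k)\big|.
\end{equation*}
Indeed, an index $k\in\mathbb{N}_n$ lies in $\sigma\bigtriangleup\tau$ exactly when $k$ belongs to precisely one of $\sigma,\tau$, which happens exactly when $\mathbf{1}_{\sigma}(k)\neq\mathbf{1}_{\tau}(k)$, i.e. when $|\mathbf{1}_{\sigma}(k)-\mathbf{1}_{\tau}(k)|=1$; summing over $k$ gives the claim. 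The right-hand side is by definition the Hamming distance $|\Phi(\sigma)-\Phi(\tau)|$. Hence $\#(\sigma\bigtriangleup\tau)=1$ if and only if $|\Phi(\sigma)-\Phi(\tau)|=1$, that is, $\{\sigma,\tau\}\in\mathfrak{E}_n$ if and only if $\{\Phi(\sigma),\Phi(\tau)\}\in E^{(n+1)}$. Combined with the bijectivity of $\Phi$, this is exactly the statement that $\Phi$ is a graph isomorphism.

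I do not anticipate a genuine obstacle here; the only point requiring a little care is the off-by-one bookkeeping between the index set $\mathbb{N}_n=\{0,1,\dots,n\}$, which has $n+1$ elements, and the coordinate labels $\{1,2,\dots,n+1\}$ of $V^{(n+1)}$ — this mismatch is precisely why $(\Gamma_n,\mathfrak{E}_n)$ matches the $(n+1)$-dimensional hypercube rather than the $n$-dimensional one, and it is worth flagging explicitly in the proof. Everything else reduces to the elementary identity above relating the symmetric difference of subsets of $\mathbb{N}_n$ to the Hamming distance of their indicator vectors.
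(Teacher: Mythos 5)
Your proposal is correct and uses essentially the same isomorphism as the paper, namely the indicator-vector map $\sigma\mapsto\big(\mathbf{1}_{\sigma}(0),\dots,\mathbf{1}_{\sigma}(n)\big)$. The only difference is in how adjacency-preservation is verified: the paper splits into the two cases $\sigma\subset\tau$ and $\tau\subset\sigma$ and explicitly checks only the forward implication $\sigma\sim\tau\Rightarrow J(\sigma)\sim J(\tau)$, whereas your identity $\#(\sigma\bigtriangleup\tau)=\sum_{k=0}^{n}\big|\mathbf{1}_{\sigma}(k)-\mathbf{1}_{\tau}(k)\big|$ yields the full biconditional in one line, which is slightly cleaner and more complete.
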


\begin{proof}
Define a mapping $J\colon \Gamma_n\rightarrow V^{(n+1)}$ in the manner as follows
\begin{equation*}
  J(\sigma) = \big(\mathbf{1}_{\sigma}(0), \mathbf{1}_{\sigma}(1),\cdots , \mathbf{1}_{\sigma}(n)\big),\quad \sigma \in  \Gamma_n,
\end{equation*}
where $\mathbf{1}_{\sigma}(x)$ denotes the indicator of $\sigma$ as a subset of $\mathbb{N}_n$. It is easy to see that
$J$ is a bijection from $\Gamma_n$ to $V^{(n+1)}$. Let $\sigma$, $\tau\in \Gamma_n$ be such that $\sigma\sim \tau$.
Then $\#(\sigma\bigtriangleup \tau)=1$, which implies that $\sigma\subset \tau$ with $\#(\tau\setminus \sigma)=1$
or $\tau \subset \sigma$ with $\#(\sigma\setminus \tau)=1$. In the case of $\sigma\subset \tau$ with $\#(\tau\setminus \sigma)=1$,
we have
\begin{equation*}
  |J(\sigma)-J(\tau)| = \sum_{k=0}^{n-1}|\mathbf{1}_{\sigma}(k)-\mathbf{1}_{\tau}(k)|
  = \sum_{k\in \tau\setminus\sigma}\mathbf{1}_{\tau}(k)=1,
\end{equation*}
which means $J(\sigma)\sim J(\tau)$. Similarly, we can also get $J(\sigma)\sim J(\tau)$ in the case of
$\tau\subset \sigma$ with $\#(\sigma\setminus \tau)=1$.
\end{proof}

Consider the graph $(\Gamma_n, \mathfrak{E}_n)$, where $n\geq 0$.
For vertices $\sigma$ and $\tau\in \Gamma_n$, as usual we use $\sigma\sim \tau$ to mean that $\sigma$ and $\tau$ are adjacent,
namely $\{\sigma,\tau\} \in \mathfrak{E}_n$.

\begin{proposition}\label{prop-3-2}
Let $\sigma$, $\tau\in \Gamma_n$ be vertices in the graph $(\Gamma_n, \mathfrak{E}_n)$. Then, $\sigma\sim \tau$ if and only if
there exists a unique $k\in \mathbb{N}_n$ such that $k\in \sigma$ with $\sigma\setminus k = \tau$ or $k\notin \sigma$ with $\sigma \cup k = \tau$.
\end{proposition}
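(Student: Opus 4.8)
The plan is to reduce everything to the definition of adjacency in $(\Gamma_n,\mathfrak{E}_n)$, namely $\sigma\sim\tau$ if and only if $\#(\sigma\bigtriangleup\tau)=1$, and then analyze the symmetric difference purely set-theoretically. The first step is to recall that $\sigma\bigtriangleup\tau = (\sigma\setminus\tau)\sqcup(\tau\setminus\sigma)$ is a disjoint union, so that $\#(\sigma\bigtriangleup\tau) = \#(\sigma\setminus\tau) + \#(\tau\setminus\sigma)$. Hence $\#(\sigma\bigtriangleup\tau)=1$ holds precisely when exactly one of the two summands equals $1$ and the other equals $0$; that is, either $\tau\subsetneq\sigma$ with $\#(\sigma\setminus\tau)=1$, or $\sigma\subsetneq\tau$ with $\#(\tau\setminus\sigma)=1$.

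For the forward implication I would treat these two cases in turn. In the first case, writing $\sigma\setminus\tau=\{k\}$ yields $k\in\sigma$, $k\notin\tau$, and $\tau=\sigma\setminus k$. In the second case, writing $\tau\setminus\sigma=\{k\}$ yields $k\notin\sigma$ and $\tau=\sigma\cup k$. In both cases the element $k$ produced is exactly the unique element of the singleton $\sigma\bigtriangleup\tau$. To obtain the uniqueness asserted in the statement, I would further observe that any $k'\in\mathbb{N}_n$ satisfying either of the two listed conditions forces $\sigma\bigtriangleup\tau=\{k'\}$, whence $k'=k$; note also that the two alternatives ``$k\in\sigma$ with $\sigma\setminus k=\tau$'' and ``$k\notin\sigma$ with $\sigma\cup k=\tau$'' are mutually exclusive, so no ambiguity arises.

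For the reverse implication I would simply compute $\sigma\bigtriangleup\tau$ from each assumed configuration: if $k\in\sigma$ and $\sigma\setminus k=\tau$, then $\sigma\setminus\tau=\{k\}$ and $\tau\setminus\sigma=\emptyset$, so $\#(\sigma\bigtriangleup\tau)=1$; if $k\notin\sigma$ and $\sigma\cup k=\tau$, then $\sigma\setminus\tau=\emptyset$ and $\tau\setminus\sigma=\{k\}$, so again $\#(\sigma\bigtriangleup\tau)=1$. By definition of $\mathfrak{E}_n$ this gives $\sigma\sim\tau$.

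The argument is entirely elementary and I do not expect a genuine obstacle; the only point deserving a sentence of care is the word \emph{unique} in the statement, which is handled by the remark above that the single element of $\sigma\bigtriangleup\tau$ is forced and that the two displayed alternatives cannot hold simultaneously.
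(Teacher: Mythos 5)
Your proposal is correct and follows essentially the same route as the paper: both reduce adjacency to $\#(\sigma\bigtriangleup\tau)=1$, split into the two cases $\tau\subsetneq\sigma$ and $\sigma\subsetneq\tau$ according to which of $\sigma\setminus\tau$, $\tau\setminus\sigma$ is the singleton, and verify the converse by computing the symmetric difference. Your treatment is slightly more explicit about the disjoint-union cardinality count and the uniqueness of $k$, but the argument is the same.
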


\begin{proof}
Let $\sigma\sim \tau$. Then $\#(\sigma\bigtriangleup \tau)=1$, which implies that $\#(\sigma\setminus \tau)=1$ with $\tau\subset \sigma$ or
$\#(\tau\setminus \sigma)=1$ with $\sigma\subset \tau$, which implies that there exists a unique $k\in \mathbb{N}_n$
such that $k\in \sigma$ with $\sigma\setminus k = \tau$ or $k\notin \sigma$ with $\sigma \cup k = \tau$.

Conversely, if there exists a unique $k\in \mathbb{N}_n$ such that $k\in \sigma$ with $\sigma\setminus k = \tau$ or $k\notin \sigma$ with $\sigma \cup k = \tau$,
then $\sigma\bigtriangleup \tau=\{k\}$, hence $\#(\sigma\bigtriangleup \tau)=1$, which means $\{\sigma,\tau\}\in \mathfrak{E}_n$,
namely $\sigma\sim \tau$.
\end{proof}

\begin{corollary}\label{col-2-3}
Let $\sigma\in \Gamma_n$ be a vertex in the graph  $(\Gamma_n, \mathfrak{E}_n)$. Then, for $k\in \mathbb{N}_n$, one has: $\sigma\sim (\sigma\setminus k)$ when $k\in \sigma$;\
$\sigma\sim (\sigma\cup k)$ when $k\notin \sigma$.
\end{corollary}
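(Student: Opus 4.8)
The plan is to deduce this directly from Proposition~\ref{prop-3-2}, or, even more quickly, straight from the definition of $\mathfrak{E}_n$. First I would treat the case $k\in\sigma$: setting $\tau=\sigma\setminus k$, I note that $\tau\subset\sigma$ with $\#(\sigma\setminus\tau)=\#\{k\}=1$, so $k$ is exactly the (unique) element witnessing the criterion in Proposition~\ref{prop-3-2}; hence $\sigma\sim\tau$, i.e.\ $\sigma\sim(\sigma\setminus k)$. Symmetrically, in the case $k\notin\sigma$ I would set $\tau=\sigma\cup k$, observe that $\sigma\subset\tau$ with $\#(\tau\setminus\sigma)=1$, and again invoke Proposition~\ref{prop-3-2} to conclude $\sigma\sim(\sigma\cup k)$.

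Alternatively, and this is really the shortest route, I would compute the symmetric difference directly: when $k\in\sigma$ one has $\sigma\bigtriangleup(\sigma\setminus k)=\{k\}$, and when $k\notin\sigma$ one has $\sigma\bigtriangleup(\sigma\cup k)=\{k\}$; in either case $\#(\sigma\bigtriangleup\tau)=1$, so $\{\sigma,\tau\}\in\mathfrak{E}_n$ by definition, which is precisely the assertion $\sigma\sim\tau$.

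There is essentially no obstacle here; the only point requiring a moment's care is that the two cases are genuinely non-degenerate, namely that $\sigma\setminus k\neq\sigma$ when $k\in\sigma$ and $\sigma\cup k\neq\sigma$ when $k\notin\sigma$, so that $\{\sigma,\tau\}$ is a legitimate two-element subset of $\Gamma_n$ and hence a candidate edge. Both facts are immediate from $k\in\sigma$ and $k\notin\sigma$, respectively. I would close by noting that this corollary is exactly what will later let the operators $\partial_k^{\ast}+\partial_k$ act as shifts along the $k$-th coordinate direction of the hypercube, which is presumably the reason it is singled out at this point.
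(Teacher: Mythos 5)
Your proposal is correct and follows the same route the paper intends: the paper states this as an immediate corollary of Proposition~\ref{prop-3-2} without writing out a proof, and your first argument (taking $\tau=\sigma\setminus k$ or $\tau=\sigma\cup k$ and checking the criterion of that proposition) is exactly that deduction. Your alternative direct computation of the symmetric difference and your remark on non-degeneracy of the pair $\{\sigma,\tau\}$ are also sound, if slightly more than the paper bothers to record.
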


\subsection{Coin operator system}\label{subsec-3-2}

To define a quantum walk model, one needs some operators on a Hilbert space to describe the walker's internal degrees of freedom.
Such operators are usually known as coin operators, while the space they act on is referred to as the coin space.
This subsection defines a notion of coin operator system and examine properties of such system from a perspective of pure mathematics.

In this subsection, we assume that $n\geq 0$ is a given nonnegative integer and $\mathcal{K}$ is a Hilbert space with $\dim \mathcal{K}\geq n+1$
(namely the dimension of $\mathcal{K}$ is not less than $n+1$).
We denote by $\langle \cdot,\cdot\rangle_{\mathcal{K}}$ and $\|\cdot\|_{\mathcal{K}}$ the inner product and norm in $\mathcal{K}$, respectively.

\begin{definition}\label{def-coin-operator}
A system $\mathfrak{C}=\{C_k \mid 0\leq k \leq n\}$ of bounded operators on $\mathcal{K}$ is called a coin operator system
if the sum $\sum_{k=0}^n C_k$ is a unitary operator on $\mathcal{K}$ and
\begin{equation}\label{eq}
C^*_jC_k= C_j C_k^*=0,\quad  j\neq k,\, 0\leq j,\, k \leq n,
\end{equation}
where $C_k^*$ denotes the adjoint of $C_k$.
\end{definition}

Let $\mathfrak{C}=\{C_k \mid 0\leq k \leq n\}$ be a coin operator system on $\mathcal{K}$. Then, it follows immediately that
\begin{equation}\label{eq}
  \sum_{k=0}^n C_k^*C_k = \sum_{k=0}^n C_kC_k^* =I,
\end{equation}
where $I$ denotes the the identity operator on $\mathcal{K}$. Based on these equalities, one can further know that
both $C_k^*C_k$ and $C_kC_k^*$ are projection operators on $\mathcal{K}$ for each $k$ with $0\leq k \leq n$.
To characterize a coin operator system, let us first introduce a notion as follows.

A system $\{P_k \mid 0\leq k \leq n\}$ of projection operators on the space $\mathcal{K}$ is called a resolution of the identity if
$\sum_{k=0}^n P_k = I$ and $P_jP_k =0$, $j\neq k$, $0\leq j$, $k\leq n$.

In terms of a resolution of the identity as well as a unitary operator, the next theorem provides a necessary and sufficient condition
for a system of bounded operators to be a coin operator system.

\begin{theorem}\label{coin-characterization}
Let\ $\mathfrak{C}=\{ C_k \mid 0\leq k\leq n\}$ be a system of bounded operators on $\mathcal{K}$. Then the following statements are equivalent:
\begin{enumerate}
  \item[(1)] The system\ $\mathfrak{C}=\{ C_k \mid 0\leq k\leq n\}$ is a coin operator system on $\mathcal{K}$.
  \item[(2)] There exist a unitary operator $U$ and a resolution of the identity $\{P_k \mid 0\leq k \leq n\}$ on $\mathcal{K}$ such that
   $C_k = P_kU$,\ \ $0\leq k\leq n$.
\end{enumerate}
\end{theorem}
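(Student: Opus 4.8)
The plan is to prove the two implications separately, the easy direction being $(2)\Rightarrow(1)$ and the substantive one being $(1)\Rightarrow(2)$. For $(2)\Rightarrow(1)$, I would simply substitute $C_k=P_kU$. Since $\sum_{k=0}^n P_k=I$, one gets $\sum_{k=0}^n C_k=\bigl(\sum_{k=0}^n P_k\bigr)U=U$, which is unitary by hypothesis. Using that the $P_k$ are self-adjoint with $P_jP_k=0$ for $j\neq k$ and that $UU^*=U^*U=I$, one computes $C_j^*C_k=U^*P_jP_kU=0$ and $C_jC_k^*=P_jUU^*P_k=P_jP_k=0$ for $j\neq k$. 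Hence $\mathfrak{C}$ is a coin operator system.

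For $(1)\Rightarrow(2)$, let $U:=\sum_{k=0}^n C_k$, which is unitary by definition, and recall from the discussion preceding the theorem that $\sum_{k=0}^n C_k^*C_k=\sum_{k=0}^n C_kC_k^*=I$ and that each $C_k^*C_k$ and $C_kC_k^*$ is a projection. The first observation I would make is that each $C_k$ is then a partial isometry: since $E:=C_k^*C_k$ is a projection, $\|C_k(I-E)\xi\|^2=\langle(I-E)\xi,\,E(I-E)\xi\rangle_{\mathcal{K}}=0$ for every $\xi\in\mathcal{K}$, so $C_k=C_kE=C_kC_k^*C_k$ (and symmetrically $C_k^*C_kC_k^*=C_k^*$). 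With this in hand I would set $P_k:=C_kC_k^*$ for $0\leq k\leq n$. Each $P_k$ is a projection by hypothesis; $\sum_{k=0}^n P_k=I$; and for $j\neq k$ one has $P_jP_k=C_j(C_j^*C_k)C_k^*=0$ by the defining orthogonality relation $C_j^*C_k=0$. Thus $\{P_k\mid 0\leq k\leq n\}$ is a resolution of the identity. It then remains to verify $C_k=P_kU$: expanding, $P_kU=C_kC_k^*\sum_{j=0}^n C_j=\sum_{j=0}^n C_kC_k^*C_j$, and since $C_k^*C_j=0$ for $j\neq k$ only the $j=k$ term survives, giving $P_kU=C_kC_k^*C_k=C_k$.

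The argument is essentially a sequence of algebraic manipulations with the defining relations, so there is no deep obstacle; the one point that requires a small idea rather than bookkeeping is recognizing that the hypotheses force each $C_k$ to be a partial isometry, so that the identity $C_kC_k^*C_k=C_k$ becomes available — this is precisely what lets the candidate projections $P_k=C_kC_k^*$ reconstruct $C_k$ through $C_k=P_kU$. I would also double-check at the outset that $\sum_{k=0}^n C_k^*C_k=\sum_{k=0}^n C_kC_k^*=I$ genuinely follows from unitarity of $\sum_{k=0}^n C_k$ together with the orthogonality relations, since the whole reconstruction rests on it; this comes from expanding $U^*U=I$ and $UU^*=I$ and discarding the vanishing cross terms $C_j^*C_k$ and $C_jC_k^*$ with $j\neq k$.
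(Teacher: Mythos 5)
Your proof is correct and follows essentially the same route as the paper: the direction $(2)\Rightarrow(1)$ is identical, and for $(1)\Rightarrow(2)$ you choose the same $U=\sum_k C_k$ and $P_k=C_kC_k^*$ and reduce everything to the identity $C_kC_k^*C_k=C_k$. The only (immaterial) difference is how that identity is justified: you derive it from the projection property of $C_k^*C_k$ via the partial-isometry argument, whereas the paper writes $C_k^*C_k=I-\sum_{j\neq k}C_j^*C_j$ and kills the cross terms with $C_kC_j^*=0$.
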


\begin{proof}
``(1)\ $\Rightarrow$\ (2)''. Let $U=\sum_{k=0}^n C_k$. Then, by the definition, $U$ is a unitary operator.
Now consider the system $\{P_k \mid 0\leq k \leq n\}$ of projection operators, where $P_k = C_kC_k^*$. Clearly,
it is a resolution of the identity. For $0\leq k\leq n$, a straightforward calculation yields
\begin{equation*}
  P_kU = C_kC_k^*\sum_{j=0}^n C_j = C_kC_k^*C_k  = C_k\Big(I- \sum_{j=0,j\neq k}^n C_j^*C_j\Big) = C_k.
\end{equation*}
``(2)\ $\Rightarrow$\ (1)''. It is easy to see that $\sum_{k=0}^n C_k = U$, which means that $\sum_{k=0}^n C_k$ is a unitary operator.
For $0\leq j$, $k\leq n$ with $j\neq k$, using $P_j P_k=0$ gives
\begin{equation*}
  C_j^*C_k = (P_jU)^*P_kU = U^*P_j P_kU =0,\quad C_jC_k^* = P_jU(P_kU)^* = P_j U U^*P_k =P_jP_k=0.
\end{equation*}
Thus $\mathfrak{C}=\{ C_k \mid 0\leq k\leq n\}$ is a coin operator system.
\end{proof}

Recall that, for $\tau\in \Gamma_n$, $\mathbf{1}_{\tau}(k)$ stands for the indicator of the set $\tau$,
which allows us to define a function $\varepsilon_{\tau}(k)$ on $\mathbb{N}_n$ as
\begin{equation}\label{eq}
\varepsilon_{\tau}(k)= 2\times\mathbf{1}_{\tau}(k)-1,\quad k\in \mathbb{N}_n.
\end{equation}
Clearly, the function $\varepsilon_{\tau}(k)$ takes values in $\{-1,\, 1\}$.

\begin{definition}\label{def-weighted-sum-of-coin}
For $\tau\in \Gamma_n$ and a coin operator system $\mathfrak{C}=\{C_k \mid 0\leq k \leq n\}$ on $\mathcal{K}$, we define
\begin{equation}\label{eq}
  U_{\tau}^{(\mathfrak{C})}=\sum_{k=0}^n \varepsilon_{\tau}(k)C_k
\end{equation}
and call it the $\varepsilon_{\tau}$-weighted sum of the coin operator system $\mathfrak{C}$.
\end{definition}

\begin{proposition}\label{prop-weighted-sum-of-coin}
Let $\mathfrak{C}=\{C_k \mid 0\leq k \leq n\}$ be a coin operator system on $\mathcal{K}$. Then
$U_{\tau}^{(\mathfrak{C})}$ is a unitary operator on $\mathcal{K}$ for each $\tau\in \Gamma_n$.
\end{proposition}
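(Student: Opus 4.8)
The plan is to verify directly that $U_{\tau}^{(\mathfrak{C})}$ is an isometry and that it is surjective, equivalently that $\big(U_{\tau}^{(\mathfrak{C})}\big)^*U_{\tau}^{(\mathfrak{C})} = U_{\tau}^{(\mathfrak{C})}\big(U_{\tau}^{(\mathfrak{C})}\big)^* = I$. First I would compute $\big(U_{\tau}^{(\mathfrak{C})}\big)^*U_{\tau}^{(\mathfrak{C})}$ by expanding the double sum $\sum_{j,k} \varepsilon_{\tau}(j)\varepsilon_{\tau}(k)\, C_j^*C_k$. The off-diagonal terms ($j\neq k$) vanish by the defining relation $C_j^*C_k = 0$ of a coin operator system, so only the diagonal terms survive, giving $\sum_{k=0}^n \varepsilon_{\tau}(k)^2\, C_k^*C_k$. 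Since $\varepsilon_{\tau}(k) \in \{-1,1\}$, we have $\varepsilon_{\tau}(k)^2 = 1$, so this reduces to $\sum_{k=0}^n C_k^*C_k = I$, where the last equality is the consequence of the coin operator system axioms noted right after Definition~\ref{def-coin-operator}.

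The computation of $U_{\tau}^{(\mathfrak{C})}\big(U_{\tau}^{(\mathfrak{C})}\big)^*$ is entirely symmetric: expanding gives $\sum_{j,k}\varepsilon_{\tau}(j)\varepsilon_{\tau}(k)\,C_jC_k^*$, the off-diagonal terms vanish by $C_jC_k^* = 0$, and the diagonal terms sum to $\sum_{k=0}^n C_kC_k^* = I$. Hence $U_{\tau}^{(\mathfrak{C})}$ is a bounded operator whose product with its adjoint on either side is the identity, which is exactly the definition of a unitary operator on $\mathcal{K}$. One might alternatively invoke Theorem~\ref{coin-characterization} and write $C_k = P_kU$ so that $U_{\tau}^{(\mathfrak{C})} = \big(\sum_k \varepsilon_{\tau}(k)P_k\big)U$; since $\{P_k\}$ is a resolution of the identity and the $\varepsilon_{\tau}(k)$ are signs, the operator $\sum_k \varepsilon_{\tau}(k)P_k$ is a self-adjoint unitary (it squares to $\sum_k P_k = I$), and a product of two unitaries is unitary. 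Either route works; I would present the direct one since it is shortest and self-contained.

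There is no real obstacle here — the statement is an immediate consequence of the two orthogonality relations $C_j^*C_k = C_jC_k^* = 0$ for $j\neq k$ together with $\sum_k C_k^*C_k = \sum_k C_kC_k^* = I$, and the only point requiring the slightest care is noting that the weights $\varepsilon_{\tau}(k)$ disappear from the diagonal terms precisely because they are $\pm 1$. The result is in effect a small lemma recording that the coin operator structure is insensitive to sign twists, which will presumably be what makes the walk's probability distribution depend only on the "coin-space components" as advertised in the introduction.
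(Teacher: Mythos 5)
Your proof is correct and follows essentially the same route as the paper: expand $\big(U_{\tau}^{(\mathfrak{C})}\big)^*U_{\tau}^{(\mathfrak{C})}$, kill the off-diagonal terms via $C_j^*C_k=0$, use $\varepsilon_{\tau}(k)^2=1$ to reduce the diagonal to $\sum_k C_k^*C_k=I$, and argue symmetrically for the other order. The alternative via the factorization $C_k=P_kU$ is a nice extra observation but not needed.
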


\begin{proof}
Let $\tau\in \Gamma_n$ be given. Using properties of the coin operator system $\mathfrak{C}$, we have
\begin{equation*}
  \big(U_{\tau}^{(\mathfrak{C})}\big)^*U_{\tau}^{(\mathfrak{C})}
= \Big(\sum_{k=0}^n \varepsilon_{\tau}(k)C_k^*\Big)\Big(\sum_{k=0}^n \varepsilon_{\tau}(k)C_k\Big)
= \sum_{k=0}^n \big(\varepsilon_{\tau}(k)\big)^2C_k^*C_k
= \sum_{k=0}^n C_k^*C_k
= I,
\end{equation*}
where $I$ stands for the identity operator on $\mathcal{K}$. Similarly, we have $U_{\tau}^{(\mathfrak{C})}\big(U_{\tau}^{(\mathfrak{C})}\big)^*=I$.
Thus $U_{\tau}^{(\mathfrak{C})}$ is a unitary operator.
\end{proof}

\subsection{Definition of the quantum walk model}\label{subsec-3-3}

In this subsection, we define our quantum walk model and examine its fundamental properties.
Throughout this subsection, we assume that $n\geq 0$ is a fixed nonnegative integer and
$\mathcal{K}$ is a finite-dimensional Hilbert space with $d_{\mathcal{K}}\equiv\dim \mathcal{K}\geq n+1$.
Additionally, we fix an orthonormal basis $\{e_j \mid 0\leq j\leq d_{\mathcal{K}}-1\}$ for $\mathcal{K}$.
We will take $\mathcal{K}$ as the coin space for our quantum walk model.

Recall that the space $\mathfrak{h}$ of square integrable Bernoulli functionals has an orthonormal basis
of form $\{Z_{\sigma} \mid \sigma \in \Gamma\}$, which is known as its canonical ONB.
We denote by $\mathfrak{h}_n$ the subspace of $\mathfrak{h}$ spanned by $\{Z_{\sigma} \mid \sigma \in \Gamma_n\}$, namely
\begin{equation}\label{eq}
  \mathfrak{h}_n = \mathrm{span}\{Z_{\sigma} \mid \sigma \in \Gamma_n\}.
\end{equation}
Note that $\Gamma_n\subset \Gamma$ and $\#(\Gamma_n)=2^{n+1}$, which implies that $\mathfrak{h}_n$
is a $2^{n+1}$-dimensional subspace of $\mathfrak{h}$, hence a closed subspace.
In other words, $\mathfrak{h}_n$ itself is a Hilbert space with the inner product $\langle\cdot,\cdot\rangle_{\mathfrak{h}}$.
It can be shown that, for all $k\in \mathbb{N}_n$, both the annihilation operator $\partial_k$ and the creation operator $\partial_k^*$
leave $\mathfrak{h}_n$ invariant.
This means that, for all $k\in \mathbb{N}_n$,  $\partial_k$ and $\partial_k^*$ can be viewed as the annihilation and creation operators on $\mathfrak{h}_n$,
respectively.

\begin{theorem}
Consider the graph $(\Gamma_n,\mathfrak{E}_n)$. Let $\sigma$, $\tau\in \Gamma_n$ be its vertices.
Then $\sigma\sim \tau$ if and only if there exists a unique $k\in \mathbb{N}_n$ such that
\begin{equation}\label{eq}
  (\partial_k^*+\partial_k)Z_{\sigma} = Z_{\tau}.
\end{equation}
\end{theorem}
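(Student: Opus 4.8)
The plan is to reduce the whole statement to one elementary identity: for every $\sigma\in\Gamma_n$ and every $k\in\mathbb{N}_n$,
\[
  (\partial_k^*+\partial_k)Z_\sigma = Z_{\sigma\bigtriangleup k},
\]
where $\sigma\bigtriangleup k := \sigma\bigtriangleup\{k\}$. This is immediate from the action formulas \eqref{eq-2-6} by a case split on whether $k\in\sigma$: if $k\in\sigma$ then $\mathbf{1}_\sigma(k)=1$, so $\partial_kZ_\sigma=Z_{\sigma\setminus k}$ and $\partial_k^*Z_\sigma=0$; if $k\notin\sigma$ then $\mathbf{1}_\sigma(k)=0$, so $\partial_kZ_\sigma=0$ and $\partial_k^*Z_\sigma=Z_{\sigma\cup k}$. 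In both cases the sum equals $Z_{\sigma\bigtriangleup k}$, and since $k\in\mathbb{N}_n$ the index set $\sigma\bigtriangleup k$ again lies in $\Gamma_n$, consistent with the $\mathfrak{h}_n$-invariance noted above.

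Granting this identity, the forward implication follows directly from Proposition~\ref{prop-3-2}: if $\sigma\sim\tau$ there is a unique $k\in\mathbb{N}_n$ with $\sigma\bigtriangleup k=\tau$ (that is precisely what the two cases $k\in\sigma,\ \sigma\setminus k=\tau$ and $k\notin\sigma,\ \sigma\cup k=\tau$ amount to), hence $(\partial_k^*+\partial_k)Z_\sigma=Z_\tau$. The uniqueness required by the theorem costs nothing extra: if also $(\partial_{k'}^*+\partial_{k'})Z_\sigma=Z_\tau$ for some $k'\in\mathbb{N}_n$, then $Z_{\sigma\bigtriangleup k'}=Z_\tau$, and because $\mathfrak{Z}=\{Z_\rho\mid\rho\in\Gamma\}$ is an orthonormal basis, distinct index sets label distinct (orthogonal) basis vectors, so $\sigma\bigtriangleup k'=\tau=\sigma\bigtriangleup k$, forcing $k'=k$. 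For the converse, suppose $(\partial_k^*+\partial_k)Z_\sigma=Z_\tau$ for some $k\in\mathbb{N}_n$; the identity gives $Z_{\sigma\bigtriangleup k}=Z_\tau$, and the ONB property again yields $\sigma\bigtriangleup k=\tau$, i.e. $\sigma\bigtriangleup\tau=\{k\}$, so $\#(\sigma\bigtriangleup\tau)=1$ and $\sigma\sim\tau$ by the definition of $\mathfrak{E}_n$.

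I do not expect any genuine difficulty here; the argument is a two-line computation plus the remark that equality of two canonical basis vectors forces equality of their index sets. The only place deserving a sentence of care is the uniqueness clause — one should spell out that it is a consequence of the orthonormality of $\mathfrak{Z}$ (equivalently, of the injectivity of the map $k\mapsto\sigma\bigtriangleup k$ on $\mathbb{N}_n$ for fixed $\sigma$), not an additional hypothesis to be imposed.
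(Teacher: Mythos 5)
Your proof is correct and takes essentially the same route as the paper: both reduce the statement to the case-split identity $(\partial_k^*+\partial_k)Z_{\sigma}=\mathbf{1}_{\sigma}(k)Z_{\sigma\setminus k}+(1-\mathbf{1}_{\sigma}(k))Z_{\sigma\cup k}$ (your $Z_{\sigma\bigtriangleup k}$) coming from (\ref{eq-2-6}) and then invoke Proposition~\ref{prop-3-2}. If anything, your handling of the uniqueness clause and of the step ``$Z_{\sigma\bigtriangleup k}=Z_{\tau}$ forces $\sigma\bigtriangleup k=\tau$'' (via orthonormality of the canonical basis) is more explicit than the paper's, which leaves those points implicit.
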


\begin{proof}
Let $\tau\in \Gamma_n$. Then, by Proposition~\ref{prop-3-2}, there exists a unique $k\in \mathbb{N}_n$ such that $k\in \sigma$ with $\tau=\sigma\setminus k$
or $k\notin \sigma$ with $\tau=\sigma\cup k$, which implies that
\begin{equation*}
Z_{\tau}= 1_{\sigma}(k) Z_{\sigma\setminus k} + (1-1_{\sigma}(k))Z_{\sigma\cup k} = (1-1_{\sigma}(k))Z_{\sigma\cup k} +  1_{\sigma}(k) Z_{\sigma\setminus k}.
\end{equation*}
On the other hand, it follows from properties of $\partial_k$ and $\partial_k^*$ (see (\ref{eq-2-6}) for details) that
\begin{equation*}
(\partial_k^* + \partial_k)Z_{\sigma} = (1-1_{\sigma}(k))Z_{\sigma\cup k} + 1_{\sigma}(k) Z_{\sigma\setminus k}.
\end{equation*}
Thus $(\partial_k^* + \partial_k)Z_{\sigma}=Z_{\tau}$. Now suppose that $(\partial_k^* + \partial_k)Z_{\sigma}=Z_{\tau}$. Then, we have
\begin{equation*}
  Z_{\tau}= (1-1_{\sigma}(k))Z_{\sigma\cup k} + 1_{\sigma}(k) Z_{\sigma\setminus k},
\end{equation*}
which implies that $k\in \sigma$ with $\tau=\sigma\setminus k$ or $k\notin \sigma$ with $\tau=\sigma\cup k$,
which together with Proposition~\ref{prop-3-2} implies $\sigma\sim \tau$ .
\end{proof}

\begin{remark}
Let $\sigma$ be a vertex in the graph $(\Gamma_n,\mathfrak{E}_n)$ and $k\in \mathbb{N}_n$. Then, using properties of $\partial_k$ and $\partial_k^*$, we have
\begin{equation*}
  (\partial_k^*+\partial_k)Z_{\sigma}
= \left\{
    \begin{array}{ll}
      Z_{\sigma\setminus k}, & \hbox{$k\in \sigma$;}\\
      Z_{\sigma\cup k}, & \hbox{$k\notin \sigma$.}
    \end{array}
  \right.
\end{equation*}
On the hand, by Corollary~\ref{col-2-3}, $\sigma\sim (\sigma\setminus k)$ when $k\in \sigma$; $\sigma\sim (\sigma\cup k)$ when $k\notin \sigma$.
Therefore, the operator $(\partial_k^*+\partial_k)$  on $\mathfrak{h}_n$ behaves actually as a shift operator.
\end{remark}

As mentioned above, the Hilbert space $\mathcal{K}$ will serve as the coin space of our quantum walk model.
In what follows, we denote by $\langle\cdot,\cdot\rangle$ and $\|\cdot\|$ the inner product and norm in the tensor space $\mathfrak{h}_n\otimes \mathcal{K}$.
We mention that $\{Z_{\sigma}\otimes e_j \mid \sigma\in \Gamma_n,\, 0\leq j\leq d_{\mathcal{K}}-1\}$ is an orthonormal basis for $\mathfrak{h}_n\otimes \mathcal{K}$.

\begin{theorem}
Let $\mathfrak{C}=\{C_k \mid 0\leq  k \leq n\}$ be a coin operator system on the coin space $\mathcal{K}$ and write
\begin{equation}\label{evolution-operator}
  \mathsf{W}\!_{\mathfrak{C}}= \sum_{k=0}^n(\partial_k^*+\partial_k)\otimes C_k.
\end{equation}
Then $\mathsf{W}\!_{\mathfrak{C}}$ is a unitary operator on $\mathfrak{h}_n\otimes \mathcal{K}$.
\end{theorem}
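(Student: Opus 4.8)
The plan is to verify directly that $\mathsf{W}\!_{\mathfrak{C}}^*\mathsf{W}\!_{\mathfrak{C}} = \mathsf{W}\!_{\mathfrak{C}}\mathsf{W}\!_{\mathfrak{C}}^* = I$ on $\mathfrak{h}_n\otimes\mathcal{K}$, exploiting the CAR relations \eqref{eq-2-7}--\eqref{eq-2-8} in the $\mathfrak{h}_n$-factor and the coin-system relations from Definition~\ref{def-coin-operator} in the $\mathcal{K}$-factor. First I would record that $\mathsf{W}\!_{\mathfrak{C}}^* = \sum_{k=0}^n (\partial_k^* + \partial_k)\otimes C_k^*$, using that $(\partial_k^*+\partial_k)$ is self-adjoint (it is the sum of an operator and its adjoint) and $(A\otimes B)^* = A^*\otimes B^*$.

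Next I would expand the product $\mathsf{W}\!_{\mathfrak{C}}^*\mathsf{W}\!_{\mathfrak{C}} = \sum_{j,k=0}^n \big[(\partial_j^*+\partial_j)(\partial_k^*+\partial_k)\big]\otimes \big[C_j^*C_k\big]$ and split the double sum into diagonal terms ($j=k$) and off-diagonal terms ($j\neq k$). For the off-diagonal terms, the relation $C_j^*C_k = 0$ (from Definition~\ref{def-coin-operator}) kills every summand outright, so those contribute nothing. For the diagonal terms, $(\partial_k^*+\partial_k)^2 = \partial_k^*\partial_k^* + \partial_k^*\partial_k + \partial_k\partial_k^* + \partial_k\partial_k = 0 + I + 0 = I$ by \eqref{eq-2-8}, so the diagonal part reduces to $\sum_{k=0}^n I\otimes C_k^*C_k = I\otimes\big(\sum_{k=0}^n C_k^*C_k\big) = I\otimes I = I$, where the last step uses the identity $\sum_{k=0}^n C_k^*C_k = I$ noted in the text immediately after Definition~\ref{def-coin-operator}. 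The computation of $\mathsf{W}\!_{\mathfrak{C}}\mathsf{W}\!_{\mathfrak{C}}^*$ is entirely symmetric, using $C_jC_k^* = 0$ for $j\neq k$ and $\sum_{k=0}^n C_kC_k^* = I$. One should also observe at the outset that $\mathsf{W}\!_{\mathfrak{C}}$ genuinely maps $\mathfrak{h}_n\otimes\mathcal{K}$ into itself, since each $\partial_k^*+\partial_k$ leaves $\mathfrak{h}_n$ invariant (as recalled in the paragraph preceding the previous theorem); thus $\mathsf{W}\!_{\mathfrak{C}}$ is a well-defined bounded operator on a finite-dimensional space, and a bounded operator on a Hilbert space with $\mathsf{W}^*\mathsf{W} = \mathsf{W}\mathsf{W}^* = I$ is unitary.

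I do not anticipate a serious obstacle here: the argument is a bookkeeping exercise in which the two sets of algebraic relations — the CAR for QBN and the orthogonality relations for the coin system — are designed precisely to interlock so that cross terms vanish and diagonal terms telescope. The only point requiring a little care is making sure the off-diagonal vanishing is invoked before any attempt to simplify $(\partial_j^*+\partial_j)(\partial_k^*+\partial_k)$ for $j\neq k$, since that mixed product is \emph{not} zero in $\mathfrak{h}_n$ — it is the coin factor $C_j^*C_k$ (resp.\ $C_jC_k^*$) that does the work there. With the order of operations arranged as above, the proof is short and self-contained.
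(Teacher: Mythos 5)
Your proposal is correct and follows essentially the same route as the paper's own proof: expand $\mathsf{W}\!_{\mathfrak{C}}^*\mathsf{W}\!_{\mathfrak{C}}$, kill the cross terms via $C_j^*C_k=0$, reduce the diagonal via $(\partial_k^*+\partial_k)^2=I$ from the CAR relations, and finish with $\sum_k C_k^*C_k=I$ (and symmetrically for $\mathsf{W}\!_{\mathfrak{C}}\mathsf{W}\!_{\mathfrak{C}}^*$). Your explicit remarks on the off-diagonal vanishing and on $\mathfrak{h}_n$-invariance are points the paper leaves implicit, but the argument is the same.
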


\begin{proof}
It follows from (\ref{eq-2-7}) and (\ref{eq-2-8}) that $(\partial_k^*+\partial_k)^2=I_{\mathfrak{h}_n}$, $0\leq k \leq n$,
where $I_{\mathfrak{h}_n}$ denotes the identity operator on $\mathfrak{h}_n$. Thus, by Definition~\ref{def-coin-operator}
and properties of coin operator systems, we have
\begin{equation*}
\begin{split}
  {\mathsf{W}\!_{\mathfrak{C}}}^*\mathsf{W}\!_{\mathfrak{C}}
  &= \Big(\sum_{k=0}^n(\partial_k^*+\partial_k)\otimes C_k^*\Big)\Big(\sum_{k=0}^n(\partial_k^*+\partial_k)\otimes C_k\Big)\\
  &= \sum_{k=0}^n(\partial_k^*+\partial_k)^2\otimes C_k^*C_k\\
  &= \sum_{k=0}^nI_{\mathfrak{h}_n}\otimes C_k^*C_k\\
  &= I_{\mathfrak{h}_n}\otimes I_{\mathcal{K}}\\
  &=I,
\end{split}
\end{equation*}
where $I_{\mathcal{K}}$ means the identity operator on $\mathcal{K}$. Similarly, we have $\mathsf{W}\!_{\mathfrak{C}}{\mathsf{W}\!_{\mathfrak{C}}}^*=I$.
\end{proof}

In the following, unless otherwise specified, we always assume that $\mathfrak{C}=\{C_k \mid  0\leq k \leq n\}$ is a fixed coin operator system on the space $\mathcal{K}$.
We call the operator $\mathsf{W}\!_{\mathfrak{C}}$ defined by (\ref{evolution-operator})
the unitary operator on $\mathfrak{h}_n\otimes \mathcal{K}$ generated by the coin operator system $\mathfrak{C}$.
The next definition describes our quantum walk model on the graph $(\Gamma_n, \mathfrak{E}_n)$.

\begin{definition}\label{def-model}
The quantum walk on the graph $(\Gamma_n, \mathfrak{E}_n)$ with $\mathsf{W}\!_{\mathfrak{C}}$ being the evolution operator is the discrete-time quantum walk
that admits the following features:
\begin{itemize}
  \item The walk takes $\mathfrak{h}_n\otimes \mathcal{K}$ as its state space, where $\mathfrak{h}_n$
       describes the position information of the walk, while $\mathcal{K}$ describes its internal degrees of freedom;
  \item The states of the walk are represented by unit vectors in $\mathfrak{h}_n\otimes \mathcal{K}$ and the time evolution of the walk is governed by the equation
        \begin{equation}\label{eq}
          \Phi_{t+1} = \mathsf{W}\!_{\mathfrak{C}}\Phi_t,\quad t\geq 0,
        \end{equation}
  where $\Phi_t$ denotes the state of the walk at time $t\geq 0$, especially $\Phi_0$ denotes the initial state of the walk;
  \item The probability $P_t(\sigma\,|\,\Phi_0)$ of finding the walker on vertex $\sigma\in \Gamma_n$ at time $t\geq 0$ is given by
      \begin{equation}\label{eq}
        P_t(\sigma\,|\,\Phi_0) = \sum_{j=0}^{d_{\mathcal{K}}-1} |\langle Z_{\sigma}\otimes e_j, \Phi_t\rangle|^2,
      \end{equation}
 where $\{e_j\mid 0\leq j\leq d_{\mathcal{K}}-1\}$ is the orthonormal basis for the coin space $\mathcal{K}$.
\end{itemize}

\end{definition}

Conventionally, $\mathfrak{h}_n$ and $\mathcal{K}$ are known as the position space and coin space of the walk, respectively,
while the function $\sigma\mapsto P_t(\sigma\,|\,\Phi_0)$ on $\Gamma_n$ is called the probability distribution of the walk at time $t\geq 0$,
which usually depends on the initial state $\Phi_0$ and the evolution operator $\mathsf{W}\!_{\mathfrak{C}}$.

\begin{remark}
In what follows, we simply use $\mathsf{W}\!_{\mathfrak{C}}$ to indicate the quantum walk on the graph $(\Gamma_n, \mathfrak{E}_n)$
with $\mathsf{W}\!_{\mathfrak{C}}$ being the evolution operator. In other words, when we say the walk $\mathsf{W}\!_{\mathfrak{C}}$, we just mean
the quantum walk on the graph $(\Gamma_n, \mathfrak{E}_n)$ with $\mathsf{W}\!_{\mathfrak{C}}$ being the evolution operator.
\end{remark}

\subsection{Probability distribution of the quantum walk}\label{subsec-3-4}

In the present subsection, we examine properties of the walk $\mathsf{W}\!_{\mathfrak{C}}$ from a perspective of probability distribution.
We continue to use the assumptions made in the previous subsection.
Unless otherwise specified, we always use $\Phi_0$ to mean the initial state of the walk $\mathsf{W}\!_{\mathfrak{C}}$.

Recall that, for $\sigma\in \Gamma_n$, the function $\varepsilon_{\sigma}(\cdot)$ on $\mathbb{N}_n$ is defined by
$\varepsilon_{\sigma}(k)= 2\times\mathbf{1}_{\sigma}(k)-1$,
where $\mathbf{1}_{\sigma}(k)$ is the indicator of $\sigma$. With these functions, we introduce the following vectors in
the position space $\mathfrak{h}_n$
\begin{equation}\label{eq-3-12}
  \widehat{Z}_{\sigma}= \frac{1}{\sqrt{2^{n+1}}}\sum_{\tau\in \Gamma_n} \Big(\prod_{k\in \tau}\varepsilon_{\sigma}(k)\Big)Z_{\tau},\quad \sigma\in \Gamma_n,
\end{equation}
where $\prod_{k\in \tau}\varepsilon_{\sigma}(k)=1$ if $\tau=\emptyset$. For $\sigma$, $\tau\in \Gamma_n$, one can easily obtain the next useful formula
\begin{equation}\label{eq}
  \langle Z_{\sigma}, \widehat{Z}_{\tau}\rangle_{\mathfrak{h}} = \frac{1}{\sqrt{2^{n+1}}}\prod_{k\in \sigma}\varepsilon_{\tau}(k)
  = \frac{(-1)^{\#(\sigma\setminus \tau)}}{\sqrt{2^{n+1}}},
\end{equation}
where $\#(\sigma\setminus \tau)$ denotes the cardinality of the set $\sigma\setminus \tau$.
Additionally, we also introduce the following operators on the position space $\mathfrak{h}_n$
\begin{equation}\label{def-of-A-sigma}
  A_{\sigma} = \prod_{k=0}^n \big(I + \varepsilon_{\sigma}(k)(\partial_k^* + \partial_k)\big),\quad \sigma\in \Gamma_n,
\end{equation}
where $I$ is the identity operator on $\mathfrak{h}_n$.

\begin{proposition}\label{prop-3-7}
Let $\sigma\in \Gamma_n$ be given. Then $\widehat{Z}_{\sigma} =\frac{1}{\sqrt{2^{n+1}}} A_{\sigma}Z_{\emptyset}$, where $Z_{\emptyset}$
is the basis vector in the canonical ONB for $\mathfrak{h}_n$.
\end{proposition}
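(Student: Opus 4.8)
The plan is to expand the product defining $A_{\sigma}$ into a sum over subsets of $\mathbb{N}_n$ and evaluate it termwise on $Z_{\emptyset}$, then match the outcome with the sum defining $\widehat{Z}_{\sigma}$ in (\ref{eq-3-12}).

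First I would record two elementary facts. The first is that, for $k\in\mathbb{N}_n$ and $\tau\in\Gamma_n$, the operator $\partial_k^*+\partial_k$ acts on the canonical basis vector $Z_{\tau}$ by flipping the membership of $k$, namely $(\partial_k^*+\partial_k)Z_{\tau}=Z_{\tau\triangle\{k\}}$; this is immediate from (\ref{eq-2-6}), since $(\partial_k^*+\partial_k)Z_\tau=\mathbf{1}_\tau(k)Z_{\tau\setminus k}+[1-\mathbf{1}_\tau(k)]Z_{\tau\cup k}$. The second is that the operators $\{\partial_k^*+\partial_k\mid 0\le k\le n\}$ pairwise commute: for $j\neq k$, expanding $(\partial_j^*+\partial_j)(\partial_k^*+\partial_k)$ and applying the equal-time CAR (\ref{eq-2-7}) to each of the four terms ($\partial_j^*\partial_k^*=\partial_k^*\partial_j^*$, $\partial_j^*\partial_k=\partial_k\partial_j^*$, $\partial_j\partial_k^*=\partial_k^*\partial_j$, $\partial_j\partial_k=\partial_k\partial_j$) shows it equals $(\partial_k^*+\partial_k)(\partial_j^*+\partial_j)$. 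Hence the factors of $A_{\sigma}=\prod_{k=0}^n\bigl(I+\varepsilon_{\sigma}(k)(\partial_k^*+\partial_k)\bigr)$ commute, so the product is order-independent and may be multiplied out.

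Next I would carry out the expansion. Since each factor contributes either $I$ or $\varepsilon_{\sigma}(k)(\partial_k^*+\partial_k)$, one gets $A_{\sigma}=\sum_{S\subseteq\mathbb{N}_n}\bigl(\prod_{k\in S}\varepsilon_{\sigma}(k)\bigr)\prod_{k\in S}(\partial_k^*+\partial_k)$, with empty products read as $1$ and $I$ respectively. Applying this to $Z_{\emptyset}$ and using the first fact repeatedly — each $k\in S$ occurs in exactly one factor, so its membership is flipped exactly once, turning $\emptyset$ into $S$ — yields $\prod_{k\in S}(\partial_k^*+\partial_k)Z_{\emptyset}=Z_S$. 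Therefore $A_{\sigma}Z_{\emptyset}=\sum_{S\in\Gamma_n}\bigl(\prod_{k\in S}\varepsilon_{\sigma}(k)\bigr)Z_S=\sqrt{2^{n+1}}\,\widehat{Z}_{\sigma}$ by (\ref{eq-3-12}), and dividing by $\sqrt{2^{n+1}}$ gives the claim.

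I expect the only delicate point to be the bookkeeping in the commutativity step and in tracking the action of a product of shift operators on $Z_{\emptyset}$; both are routine but should be done carefully, noting that the equal-index relations (\ref{eq-2-8}) — in particular $(\partial_k^*+\partial_k)^2=I$ — never intervene here, because in the expansion each index $k$ appears in at most one factor.
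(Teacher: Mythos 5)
Your proposal is correct and follows essentially the same route as the paper's own proof: both expand the commuting product $A_{\sigma}=\prod_{k=0}^n\bigl(I+\varepsilon_{\sigma}(k)(\partial_k^*+\partial_k)\bigr)$ into a sum over subsets using the equal-time CAR (\ref{eq-2-7}), evaluate each term on $Z_{\emptyset}$ via (\ref{eq-2-6}) to obtain $Z_{\tau}$, and match the result with (\ref{eq-3-12}). Your explicit remark that $(\partial_k^*+\partial_k)Z_{\tau}=Z_{\tau\bigtriangleup\{k\}}$ and that the equal-index relations (\ref{eq-2-8}) never enter is a nice clarification, but the argument is the same.
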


\begin{proof}
Consider the operator system $\mathfrak{A}_{\sigma}=\big\{ I + \varepsilon_{\sigma}(k)(\partial_k^* + \partial_k) \mid 0\leq k\leq n\big\}$
associated with $\sigma$. According to (\ref{eq-2-7}), any two operators in $\mathfrak{A}_{\sigma}$ are
commutative. Thus, by a straightforward calculation, we can get
\begin{equation}\label{expansion-of-A-sigma}
  A_{\sigma}
= \sum_{\tau\in \Gamma_n} \Big(\prod_{k\in \tau}\varepsilon_{\sigma}(k)\Big) \Big(\prod_{k\in \tau}(\partial_k^* + \partial_k)\Big),
\end{equation}
where $\prod_{k\in \tau}(\partial_k^* + \partial_k)=I$ when $\tau=\emptyset$.
On the other hand, for each $\tau\in \Gamma_n$, by using (\ref{eq-2-6}) and the induction method we have
\begin{equation*}
  \Big(\prod_{k\in \tau}(\partial_k^* + \partial_k)\Big)Z_{\emptyset} = Z_{\tau},
\end{equation*}
which together with (\ref{expansion-of-A-sigma}) implies $A_{\sigma}Z_{\emptyset}=\sqrt{2^{n+1}}\,\widehat{Z}_{\sigma}$, equivalently
$\widehat{Z}_{\sigma}= \frac{1}{\sqrt{2^{n+1}}}A_{\sigma}Z_{\emptyset}$.
\end{proof}

\begin{proposition}\label{prop-3-8}
Let $\sigma\in \Gamma_n$ be given. Then, for all $k\in \mathbb{N}_n$,  the following formula
holds true
\begin{equation}\label{absortion-A-sigma}
 (\partial_k^* + \partial_k)A_{\sigma} = \varepsilon_{\sigma}(k)A_{\sigma}.
\end{equation}
\end{proposition}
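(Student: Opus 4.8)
The plan is to exploit two elementary facts about the operators $\partial_k^*+\partial_k$ that are already contained in the relations (\ref{eq-2-7}) and (\ref{eq-2-8}). First, $\partial_j^*+\partial_j$ and $\partial_k^*+\partial_k$ commute for all $j,k\in\mathbb{N}_n$: for $j\neq k$ this is immediate from (\ref{eq-2-7}), and for $j=k$ it is trivial. Second, $(\partial_k^*+\partial_k)^2=I$ on $\mathfrak{h}_n$, which is exactly the identity already used in the proof that $\mathsf{W}\!_{\mathfrak{C}}$ is unitary. Writing $D_k=\partial_k^*+\partial_k$ for brevity, $A_{\sigma}=\prod_{k=0}^n\big(I+\varepsilon_{\sigma}(k)D_k\big)$ is then a product of pairwise commuting factors, so for a fixed $k\in\mathbb{N}_n$ I would pull the $k$-th factor to the front and write $A_{\sigma}=(I+\varepsilon_{\sigma}(k)D_k)\,B_{\sigma,k}$, where $B_{\sigma,k}=\prod_{j\neq k}\big(I+\varepsilon_{\sigma}(j)D_j\big)$ commutes with $D_k$.

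With this factorization in hand the computation is short. Multiplying on the left by $D_k$ and using $D_k^2=I$ gives $D_kA_{\sigma}=\big(D_k+\varepsilon_{\sigma}(k)D_k^2\big)B_{\sigma,k}=\big(\varepsilon_{\sigma}(k)I+D_k\big)B_{\sigma,k}$. Since $\varepsilon_{\sigma}(k)\in\{-1,1\}$ one has $\varepsilon_{\sigma}(k)^2=1$, hence $\varepsilon_{\sigma}(k)I+D_k=\varepsilon_{\sigma}(k)\big(I+\varepsilon_{\sigma}(k)D_k\big)$, and therefore $D_kA_{\sigma}=\varepsilon_{\sigma}(k)\big(I+\varepsilon_{\sigma}(k)D_k\big)B_{\sigma,k}=\varepsilon_{\sigma}(k)A_{\sigma}$, which is the claimed formula (\ref{absortion-A-sigma}).

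There is essentially no obstacle beyond bookkeeping; the only point that deserves a word is the rearrangement $A_{\sigma}=(I+\varepsilon_{\sigma}(k)D_k)B_{\sigma,k}$, which is legitimate precisely because all the factors $I+\varepsilon_{\sigma}(j)D_j$ commute with one another — and that is inherited from the commutativity of the $D_j$, just as in the expansion (\ref{expansion-of-A-sigma}) of $A_{\sigma}$ used in Proposition~\ref{prop-3-7}. Alternatively, one can avoid even this remark by starting from (\ref{expansion-of-A-sigma}) directly: apply $D_k$ termwise to $A_{\sigma}=\sum_{\tau\in\Gamma_n}\big(\prod_{j\in\tau}\varepsilon_{\sigma}(j)\big)\big(\prod_{j\in\tau}D_j\big)$, split the sum according to whether $k\in\tau$ or $k\notin\tau$, and use $D_k^2=I$ to match the two pieces; but the factored form is the cleaner route and I would present that one.
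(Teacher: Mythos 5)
Your proof is correct and follows essentially the same route as the paper: both isolate the $k$-th factor of $A_{\sigma}$ using the commutativity of the family $\{I+\varepsilon_{\sigma}(j)(\partial_j^*+\partial_j)\}$ and then apply the identity $(\partial_k^*+\partial_k)\big(I+\varepsilon_{\sigma}(k)(\partial_k^*+\partial_k)\big)=\varepsilon_{\sigma}(k)\big(I+\varepsilon_{\sigma}(k)(\partial_k^*+\partial_k)\big)$, which rests on $(\partial_k^*+\partial_k)^2=I$ from the CAR. No gaps.
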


\begin{proof}
Let $k\in \mathbb{N}_n$ be given. Then, by CAR in equal time (\ref{eq-2-8}), we find
\begin{equation*}
  (\partial_k^* + \partial_k)\big(I + \varepsilon_{\sigma}(k)(\partial_k^* + \partial_k)\big)
= \varepsilon_{\sigma}(k)\big(I + \varepsilon_{\sigma}(k)(\partial_k^* + \partial_k)\big),
\end{equation*}
which, together with the commutativity of the operator system $\mathfrak{A}_{\sigma}$, gives
\begin{equation*}
\begin{split}
(\partial_k^* + \partial_k)A_{\sigma}
 & = (\partial_k^* + \partial_k)\big(I + \varepsilon_{\sigma}(k)(\partial_k^* + \partial_k)\big)
     \prod_{j=0,j\neq k}^n \big(I + \varepsilon_{\sigma}(j)(\partial_j^* + \partial_j)\big)\\
 & = \varepsilon_{\sigma}(k)\big(I + \varepsilon_{\sigma}(k)(\partial_k^* + \partial_k)\big)
     \prod_{j=0,j\neq k}^n \big(I + \varepsilon_{\sigma}(j)(\partial_j^* + \partial_j)\big)\\
 & = \varepsilon_{\sigma}(k)\prod_{j=0}^n \big(I + \varepsilon_{\sigma}(j)(\partial_j^* + \partial_j)\big)\\
 & = \varepsilon_{\sigma}(k)A_{\sigma}.
\end{split}
\end{equation*}
This completes the proof.
\end{proof}

\begin{theorem}\label{thr-new-ONB}
The vector system $\big\{\widehat{Z}_{\sigma} \mid \sigma \in \Gamma_n\big\}$ forms an orthonormal basis for the position space $\mathfrak{h}_n$.
Moreover, it holds true that
\begin{equation}\label{eq}
  (\partial_k^*+\partial_k)\widehat{Z}_{\sigma}= \varepsilon_{\sigma}(k)\widehat{Z}_{\sigma},\quad k\in \mathbb{N}_n,\, \sigma\in \Gamma_n.
\end{equation}
\end{theorem}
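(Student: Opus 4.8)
The plan is to dispatch the eigenvalue equation first, since it is an immediate consequence of the two preceding propositions, and then to verify orthonormality by a direct character-sum computation.

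For the eigenvalue equation, recall from Proposition~\ref{prop-3-7} that $\widehat{Z}_{\sigma} = \frac{1}{\sqrt{2^{n+1}}}A_{\sigma}Z_{\emptyset}$, and from Proposition~\ref{prop-3-8} that $(\partial_k^*+\partial_k)A_{\sigma} = \varepsilon_{\sigma}(k)A_{\sigma}$. Applying the operator $\partial_k^*+\partial_k$ to $Z_{\emptyset}$ through $A_{\sigma}$ and dividing by $\sqrt{2^{n+1}}$ yields $(\partial_k^*+\partial_k)\widehat{Z}_{\sigma} = \varepsilon_{\sigma}(k)\widehat{Z}_{\sigma}$ for every $k\in\mathbb{N}_n$ and $\sigma\in\Gamma_n$; this requires no further work.

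For the ONB claim, I would first note that $\#\Gamma_n = 2^{n+1} = \dim\mathfrak{h}_n$, so it suffices to show the system $\{\widehat{Z}_{\sigma}\mid \sigma\in\Gamma_n\}$ is orthonormal, spanning then being automatic. To compute $\langle\widehat{Z}_{\sigma},\widehat{Z}_{\tau}\rangle_{\mathfrak{h}}$, expand both vectors via the defining formula (\ref{eq-3-12}) and use the orthonormality of the canonical basis $\{Z_{\rho}\mid\rho\in\Gamma_n\}$; since $\varepsilon_{\sigma}$ and $\varepsilon_{\tau}$ are real-valued, this reduces to $\langle\widehat{Z}_{\sigma},\widehat{Z}_{\tau}\rangle_{\mathfrak{h}} = \frac{1}{2^{n+1}}\sum_{\rho\in\Gamma_n}\prod_{k\in\rho}\varepsilon_{\sigma}(k)\varepsilon_{\tau}(k)$. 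The key step is then to recognize this sum over all subsets $\rho$ of $\mathbb{N}_n$ as a product: it factorizes as $\frac{1}{2^{n+1}}\prod_{k=0}^{n}\bigl(1+\varepsilon_{\sigma}(k)\varepsilon_{\tau}(k)\bigr)$. If $\sigma=\tau$, every factor equals $2$, so the value is $1$; if $\sigma\neq\tau$, then $\sigma$ and $\tau$ differ as subsets of $\mathbb{N}_n$, so there is a $k_0\in\mathbb{N}_n$ with $\varepsilon_{\sigma}(k_0)\neq\varepsilon_{\tau}(k_0)$, whence the $k_0$-th factor is $1+(-1)=0$ and the whole product vanishes. Thus $\langle\widehat{Z}_{\sigma},\widehat{Z}_{\tau}\rangle_{\mathfrak{h}}$ is $1$ if $\sigma=\tau$ and $0$ otherwise, which completes the proof.

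There is no serious obstacle here; the only point demanding a moment's care is the factorization identity $\sum_{\rho\subseteq\mathbb{N}_n}\prod_{k\in\rho}a_k = \prod_{k=0}^{n}(1+a_k)$, which is a routine expansion of the product. An alternative, slightly more conceptual route is available: the operators $\{\partial_k^*+\partial_k\mid 0\leq k\leq n\}$ are pairwise commuting self-adjoint involutions on $\mathfrak{h}_n$ by (\ref{eq-2-7}) and (\ref{eq-2-8}), so eigenvectors attached to distinct tuples of $\pm1$ eigenvalues are automatically orthogonal; since the $\widehat{Z}_{\sigma}$ realize all $2^{n+1}$ sign patterns and each is a unit vector, they form an orthonormal basis. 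I would nevertheless present the direct computation above as the main argument, since it is entirely self-contained.
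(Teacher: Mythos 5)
Your proof is correct. The eigenvalue identity is handled exactly as in the paper, by combining Proposition~\ref{prop-3-7} with Proposition~\ref{prop-3-8}, and the dimension count reducing the ONB claim to orthonormality is also the same. Where you diverge is in the orthogonality computation: the paper works at the operator level, showing that the factor $\big(I+\varepsilon_{\sigma}(k)(\partial_k^*+\partial_k)\big)\big(I+\varepsilon_{\gamma}(k)(\partial_k^*+\partial_k)\big)$ vanishes at a distinguishing index $k$, hence $A_{\sigma}A_{\gamma}=0$, and then transfers this to the inner product via $\langle \widehat{Z}_{\sigma},\widehat{Z}_{\gamma}\rangle_{\mathfrak{h}}=\frac{1}{2^{n+1}}\langle Z_{\emptyset},A_{\sigma}A_{\gamma}Z_{\emptyset}\rangle_{\mathfrak{h}}$ using the self-adjointness of $A_{\sigma}$. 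You instead compute the inner product directly from the definition (\ref{eq-3-12}) in the canonical basis and invoke the factorization $\sum_{\rho\subseteq\mathbb{N}_n}\prod_{k\in\rho}a_k=\prod_{k=0}^{n}(1+a_k)$ with $a_k=\varepsilon_{\sigma}(k)\varepsilon_{\tau}(k)$; the two arguments hinge on the same vanishing factor $1+\varepsilon_{\sigma}(k_0)\varepsilon_{\tau}(k_0)=0$, but yours is purely scalar and avoids appealing to the self-adjointness and commutativity of the $A_{\sigma}$ (which the paper uses without explicit verification), so it is slightly more self-contained; the paper's version has the advantage of reusing the operators $A_{\sigma}$ that it has already set up and that also drive the eigenvalue identity. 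Your remark that the $\widehat{Z}_{\sigma}$ are joint eigenvectors of commuting self-adjoint involutions with distinct sign patterns, and hence automatically orthogonal, is a valid and cleaner conceptual shortcut that the paper does not take.
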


\begin{proof}
Let $\sigma$, $\gamma\in \Gamma_n$ be such that $\sigma\neq \gamma$. Then, there exists $k\in \mathbb{N}_n$ such that
$\varepsilon_{\sigma}(k) \neq \varepsilon_{\gamma}(k)$, which implies that $\varepsilon_{\sigma}(k)\varepsilon_{\gamma}(k)=-1$
and $\varepsilon_{\sigma}(k)+\varepsilon_{\gamma}(k)=0$. Thus, we have
\begin{equation*}
  \big(I + \varepsilon_{\sigma}(k)(\partial_k^* + \partial_k)\big)\big(I + \varepsilon_{\gamma}(k)(\partial_k^* + \partial_k)\big)
 = (1+ \varepsilon_{\sigma}(k)\varepsilon_{\gamma}(k))I + (\varepsilon_{\sigma}(k)+\varepsilon_{\gamma}(k))(\partial_k^* + \partial_k)=0,
\end{equation*}
which, together with (\ref{def-of-A-sigma}) as well as the commutativity of both $\mathfrak{A}_{\sigma}$ and $\mathfrak{A}_{\gamma}$, yields
\begin{equation*}
\begin{split}
  A_{\sigma} A_{\gamma}
   & =  \big(I + \varepsilon_{\sigma}(k)(\partial_k^* + \partial_k)\big)\big(I + \varepsilon_{\gamma}(k)(\partial_k^* + \partial_k)\big)\\
   &\quad \times \prod_{j=0,j\neq k}^n \big(I + \varepsilon_{\sigma}(j)(\partial_j^* + \partial_j)\big)
         \prod_{j=0,j\neq k}^n \big(I + \varepsilon_{\gamma}(j)(\partial_j^* + \partial_j)\big)\\
   & =0,
\end{split}
\end{equation*}
which, together with Proposition~\ref{prop-3-7} and the self-adjoint property of $A_{\sigma}$, further gives
\begin{equation*}
  \langle \widehat{Z}_{\sigma},\widehat{Z}_{\gamma}\rangle_{\mathfrak{h}}
   = \frac{1}{2^{n+1}}\langle  A_{\sigma}Z_{\emptyset}, A_{\gamma}Z_{\emptyset}\rangle_{\mathfrak{h}}
   = \frac{1}{2^{n+1}}\langle  Z_{\emptyset}, A_{\sigma}A_{\gamma}Z_{\emptyset}\rangle_{\mathfrak{h}}
   = 0.
\end{equation*}
It follows directly from (\ref{eq-3-12}) that
\begin{equation*}
  \langle\widehat{Z}_{\sigma}, \widehat{Z}_{\sigma}\rangle_{\mathfrak{h}}
   = \|\widehat{Z}_{\sigma}\|_{\mathfrak{h}}^2
   = \frac{1}{2^{n+1}}\sum_{\tau\in \Gamma_n} \Big\|\Big(\prod_{k\in \tau}\varepsilon_{\sigma}(k)\Big)Z_{\tau}\Big\|_{\mathfrak{h}}^2
   =  \frac{1}{2^{n+1}}\sum_{\tau\in \Gamma_n}1
   = 1.
\end{equation*}
Therefore, the system $\big\{\widehat{Z}_{\sigma} \mid \sigma \in \Gamma_n\big\}$ is an orthonormal system in $\mathfrak{h}_n$.
This, together with the fact that $\#\big\{\widehat{Z}_{\sigma} \mid \sigma \in \Gamma_n\big\} =\dim\mathfrak{h}_n=2^{n+1}$, means that
$\big\{\widehat{Z}_{\sigma} \mid \sigma \in \Gamma_n\big\}$ actually forms an orthonormal basis for $\mathfrak{h}_n$.
Finally, for $k\in \mathbb{N}_n$ and $\sigma\in \Gamma_n$, using Proposition~\ref{prop-3-7} and Proposition~\ref{prop-3-8} leads to
\begin{equation*}
  (\partial_k^*+\partial_k)\widehat{Z}_{\sigma}
 =\frac{1}{\sqrt{2^{n+1}}}(\partial_k^*+\partial_k) A_{\sigma}Z_{\emptyset}
= \frac{1}{\sqrt{2^{n+1}}}\varepsilon_{\sigma}(k)A_{\sigma}Z_{\emptyset}
= \varepsilon_{\sigma}(k)\widehat{Z}_{\sigma}.
\end{equation*}
In summary, the theorem is true.
\end{proof}

As an immediate consequence of Theorem~\ref{thr-new-ONB}, we have the following corollary, which shows that all
the shift operators $\partial_k^* +\partial_k$,  $k\in \mathbb{N}_n$, have a common fixed point in the unit sphere of $\mathfrak{h}_n$.

\begin{corollary}
Write $\xi^\star= \widehat{Z}_{\mathbb{N}_n}$. Then\ $(\partial_k^* +\partial_k) \xi^\star=\xi^\star$,\ \ $\forall\, k\in \mathbb{N}_n$.
\end{corollary}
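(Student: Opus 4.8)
The plan is to read this off directly from Theorem~\ref{thr-new-ONB}, which is exactly the eigenvalue relation $(\partial_k^*+\partial_k)\widehat{Z}_{\sigma}= \varepsilon_{\sigma}(k)\widehat{Z}_{\sigma}$ for every $k\in \mathbb{N}_n$ and every $\sigma\in \Gamma_n$. The only thing to do is specialize the index $\sigma$ to the full set $\mathbb{N}_n$ and evaluate the scalar factor.

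Concretely, first I would fix an arbitrary $k\in \mathbb{N}_n$ and observe that, since $k\in \mathbb{N}_n$, the indicator satisfies $\mathbf{1}_{\mathbb{N}_n}(k)=1$, whence by the definition $\varepsilon_{\tau}(k)= 2\times\mathbf{1}_{\tau}(k)-1$ we get $\varepsilon_{\mathbb{N}_n}(k)= 2\times 1 - 1 = 1$. Then I would invoke Theorem~\ref{thr-new-ONB} with $\sigma=\mathbb{N}_n$ to obtain
\begin{equation*}
  (\partial_k^* +\partial_k)\xi^\star
 = (\partial_k^* +\partial_k)\widehat{Z}_{\mathbb{N}_n}
 = \varepsilon_{\mathbb{N}_n}(k)\,\widehat{Z}_{\mathbb{N}_n}
 = \widehat{Z}_{\mathbb{N}_n}
 = \xi^\star.
\end{equation*}
Since $k\in \mathbb{N}_n$ was arbitrary, this holds for all $k\in \mathbb{N}_n$. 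For the remark that $\xi^\star$ lies in the unit sphere of $\mathfrak{h}_n$, I would simply note that $\{\widehat{Z}_{\sigma}\mid \sigma\in \Gamma_n\}$ is an orthonormal basis by Theorem~\ref{thr-new-ONB}, so in particular $\|\xi^\star\|_{\mathfrak{h}}=\|\widehat{Z}_{\mathbb{N}_n}\|_{\mathfrak{h}}=1$.

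There is no genuine obstacle here: the corollary is an immediate specialization of the preceding theorem, and the entire content is the one-line computation $\varepsilon_{\mathbb{N}_n}(k)=1$. The only "care" needed is to make the index-specialization explicit and to point out why the eigenvalue collapses to $1$ precisely for the top vertex $\mathbb{N}_n$ (as opposed to a general $\sigma$, where the eigenvalues $\varepsilon_\sigma(k)$ can be $\pm 1$).
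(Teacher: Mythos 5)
Your proposal is correct and follows exactly the same route as the paper's own proof: specialize the eigenvalue relation of Theorem~\ref{thr-new-ONB} to $\sigma=\mathbb{N}_n$ and compute $\varepsilon_{\mathbb{N}_n}(k)=2\times\mathbf{1}_{\mathbb{N}_n}(k)-1=1$. Nothing is missing.
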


\begin{proof}
Let $k\in \mathbb{N}_n$ be given. Then $\varepsilon_{\mathbb{N}_n}(k)= 2\times 1_{\mathbb{N}_n}(k)-1=1$, which together with Theorem~\ref{thr-new-ONB} gives
\begin{equation*}
(\partial_k^* +\partial_k)\xi^\star
=(\partial_k^* +\partial_k)\widehat{Z}_{\mathbb{N}_n}
=\varepsilon_{\mathbb{N}_n}(k)\widehat{Z}_{\mathbb{N}_n}
= \widehat{Z}_{\mathbb{N}_n}
= \xi^\star,
\end{equation*}
which is the desired.
\end{proof}

\begin{remark}\label{rem-3-3}
According to Theorem~\ref{thr-new-ONB}, the vector system $\{\widehat{Z}_{\sigma}\otimes e_j \mid \sigma \in \Gamma_n,\, 0\leq j\leq d_{\mathcal{K}}-1\}$
forms an orthonormal basis for the tensor space $\mathfrak{h}_n\otimes \mathcal{K}$. Thus, each $\Phi\in \mathfrak{h}_n\otimes \mathcal{K}$ has an expansion
of the following form
\begin{equation}
  \Phi = \sum_{\sigma\in \Gamma_n}\widehat{Z}_{\sigma}\otimes u_{\sigma},
\end{equation}
where $u_{\sigma}= \sum_{j=0}^{d_{\mathcal{K}}-1}\langle \widehat{Z}_{\sigma}\otimes e_j, \Phi\rangle e_j$.
\end{remark}

Recall that, for $\tau\in \Gamma_n$, the $\varepsilon_{\tau}$-weighted sum $U_{\tau}^{(\mathfrak{C})}$ of the coin operator system $\mathfrak{C}$
is a unitary operator on the coin space $\mathcal{K}$ (see Definition~\ref{def-weighted-sum-of-coin} and Proposition~\ref{prop-weighted-sum-of-coin}).
The next result unveils a link between the action of the evolution operator $\mathsf{W}\!_{\mathfrak{C}}$ and
that of $U_{\tau}^{(\mathfrak{C})}$.

\begin{theorem}\label{thr-property-evolution-1}
If $\tau\in \Gamma_n$  and $u\in \mathcal{K}$,
then\ $\mathsf{W}\!_{\mathfrak{C}}(\widehat{Z}_{\tau}\otimes u) = \widehat{Z}_{\tau}\otimes \big(U_{\tau}^{(\mathfrak{C})}u\big)$.
\end{theorem}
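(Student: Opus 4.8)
The plan is to compute $\mathsf{W}\!_{\mathfrak{C}}(\widehat{Z}_{\tau}\otimes u)$ directly from the definition (\ref{evolution-operator}) of the evolution operator, using the eigenvector property of $\widehat{Z}_{\tau}$ established in Theorem~\ref{thr-new-ONB}. Writing $\mathsf{W}\!_{\mathfrak{C}}=\sum_{k=0}^n(\partial_k^*+\partial_k)\otimes C_k$ and applying it to the elementary tensor $\widehat{Z}_{\tau}\otimes u$, the action factors term by term as $\sum_{k=0}^n \big((\partial_k^*+\partial_k)\widehat{Z}_{\tau}\big)\otimes (C_k u)$.

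The key step is then to invoke Theorem~\ref{thr-new-ONB}, which gives $(\partial_k^*+\partial_k)\widehat{Z}_{\tau}=\varepsilon_{\tau}(k)\widehat{Z}_{\tau}$ for every $k\in\mathbb{N}_n$. Substituting this in, each summand becomes $\varepsilon_{\tau}(k)\widehat{Z}_{\tau}\otimes (C_k u) = \widehat{Z}_{\tau}\otimes\big(\varepsilon_{\tau}(k)C_k u\big)$, since the scalar $\varepsilon_{\tau}(k)$ may be moved across the tensor sign into the coin-space factor. Pulling the common factor $\widehat{Z}_{\tau}$ out of the (finite) sum — which is legitimate because the tensor product is bilinear — yields $\widehat{Z}_{\tau}\otimes\Big(\sum_{k=0}^n\varepsilon_{\tau}(k)C_k u\Big)$. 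Recognizing $\sum_{k=0}^n\varepsilon_{\tau}(k)C_k$ as exactly $U_{\tau}^{(\mathfrak{C})}$ by Definition~\ref{def-weighted-sum-of-coin} completes the identity: $\mathsf{W}\!_{\mathfrak{C}}(\widehat{Z}_{\tau}\otimes u)=\widehat{Z}_{\tau}\otimes\big(U_{\tau}^{(\mathfrak{C})}u\big)$.

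Honestly, there is no real obstacle here: the statement is essentially a repackaging of the spectral decomposition of the shift operators $\partial_k^*+\partial_k$ on the $\widehat{Z}$-basis (Theorem~\ref{thr-new-ONB}) combined with the definition of the weighted sum. The only points requiring a word of care are the interchange of the (finite) sum with the tensor operation and the fact that $\mathsf{W}\!_{\mathfrak{C}}$, being a sum of elementary tensor-product operators, acts on elementary tensors in the obvious way; both are routine for bounded operators on a tensor product of Hilbert spaces, and since everything is finite-dimensional there are no convergence subtleties. I would present the computation as a short displayed chain of equalities with each step annotated by the result it uses (linearity of $\mathsf{W}\!_{\mathfrak{C}}$, then Theorem~\ref{thr-new-ONB}, then Definition~\ref{def-weighted-sum-of-coin}).
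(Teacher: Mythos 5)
Your proposal is correct and is essentially identical to the paper's own proof: both apply the definition of $\mathsf{W}\!_{\mathfrak{C}}$ termwise to the elementary tensor, invoke the eigenvector relation $(\partial_k^*+\partial_k)\widehat{Z}_{\tau}=\varepsilon_{\tau}(k)\widehat{Z}_{\tau}$ from Theorem~\ref{thr-new-ONB}, and then identify $\sum_{k=0}^n\varepsilon_{\tau}(k)C_k$ with $U_{\tau}^{(\mathfrak{C})}$ via Definition~\ref{def-weighted-sum-of-coin}. No gaps; nothing further to add.
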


\begin{proof}
Let $\tau\in \Gamma_n$  and $u\in \mathcal{K}$ be given. Then
$(\partial_k^* +\partial_k)\widehat{Z}_{\tau} = \varepsilon_{\tau}(k)\widehat{Z}_{\tau}$ for all $k\in \mathbb{N}_n$.
Thus, by the definition of the evolution operator $ \mathsf{W}\!_{\mathfrak{C}}$, we have
\begin{equation*}
%\begin{split}
  \mathsf{W}\!_{\mathfrak{C}}(\widehat{Z}_{\tau}\otimes u)
   = \sum_{k=0}^n [(\partial_k^* +\partial_k)\widehat{Z}_{\tau}]\otimes (C_ku)
   = \sum_{k=0}^n (\varepsilon_{\tau}(k)\widehat{Z}_{\tau})\otimes (C_ku)
   = \widehat{Z}_{\sigma}\otimes\sum_{k=0}^n \varepsilon_{\tau}(k)C_ku,
%\end{split}
\end{equation*}
which, together with Definition~\ref{def-weighted-sum-of-coin}, gives
$\mathsf{W}\!_{\mathfrak{C}}(\widehat{Z}_{\tau}\otimes u)= \widehat{Z}_{\tau}\otimes \big(U_{\tau}^{(\mathfrak{C})}u\big)$.
\end{proof}

We are now ready to establish an explicit formula for calculating the probability distribution of the walk at any time $t\geq 0$.

\begin{theorem}\label{prob-distribution-formula}
 For any $t\geq 0$, the probability distribution of the walk $\mathsf{W}\!_{\mathfrak{C}}$ at time $t$ has a representation of the following form
\begin{equation}\label{eq}
  P_t(\sigma\,|\,\Phi_0) = \frac{1}{2^{n+1}}\Big\|\sum_{\tau\in \Gamma_n}(-1)^{\#(\sigma\setminus \tau)} \big(U_{\tau}^{(\mathfrak{C})}\big)^tu_{\tau}\Big\|_{\mathcal{K}}^2,\quad
 \sigma\in \Gamma_n,
\end{equation}
where $u_{\tau}=\sum_{j=0}^{d_{\mathcal{K}}-1} \langle \widehat{Z}_{\tau}\otimes e_j, \Phi_0\rangle e_j$ for $\tau\in \Gamma_n$ and $\Phi_0$ is the initial state of the walk.
\end{theorem}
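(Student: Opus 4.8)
The plan is to diagonalize the evolution operator $\mathsf{W}\!_{\mathfrak{C}}$ against the orthonormal basis $\{\widehat{Z}_{\tau}\mid \tau\in\Gamma_n\}$ of the position space produced in Theorem~\ref{thr-new-ONB}, push the resulting description through $t$ time steps, and then change basis back to the canonical system $\{Z_{\sigma}\}$ that appears in the definition of $P_t(\sigma\,|\,\Phi_0)$.

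First I would write the initial state in the form $\Phi_0=\sum_{\tau\in\Gamma_n}\widehat{Z}_{\tau}\otimes u_{\tau}$ according to Remark~\ref{rem-3-3}, noting that the coefficient vectors $u_{\tau}$ are exactly those in the statement. Next, combining Theorem~\ref{thr-property-evolution-1} with a straightforward induction on $t$, I would establish
\[
\mathsf{W}\!_{\mathfrak{C}}^{\,t}\big(\widehat{Z}_{\tau}\otimes u\big)=\widehat{Z}_{\tau}\otimes\big((U_{\tau}^{(\mathfrak{C})})^{t}u\big),\qquad \tau\in\Gamma_n,\ u\in\mathcal{K},\ t\geq 0,
\]
the base case $t=0$ being trivial and the inductive step applying Theorem~\ref{thr-property-evolution-1} to the vector $(U_{\tau}^{(\mathfrak{C})})^{t}u\in\mathcal{K}$. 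By linearity this yields $\Phi_t=\mathsf{W}\!_{\mathfrak{C}}^{\,t}\Phi_0=\sum_{\tau\in\Gamma_n}\widehat{Z}_{\tau}\otimes\big((U_{\tau}^{(\mathfrak{C})})^{t}u_{\tau}\big)$.

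The remaining step is the change-of-basis computation. For $\sigma\in\Gamma_n$ and $0\leq j\leq d_{\mathcal{K}}-1$, I would compute $\langle Z_{\sigma}\otimes e_j,\Phi_t\rangle=\sum_{\tau\in\Gamma_n}\langle Z_{\sigma},\widehat{Z}_{\tau}\rangle_{\mathfrak{h}}\,\big\langle e_j,(U_{\tau}^{(\mathfrak{C})})^{t}u_{\tau}\big\rangle_{\mathcal{K}}$ and then insert the explicit value $\langle Z_{\sigma},\widehat{Z}_{\tau}\rangle_{\mathfrak{h}}=(-1)^{\#(\sigma\setminus\tau)}/\sqrt{2^{n+1}}$ established right after (\ref{eq-3-12}). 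Since this scalar is real, linearity in the second slot of $\langle\cdot,\cdot\rangle_{\mathcal{K}}$ lets me pull the finite sum inside:
\[
\langle Z_{\sigma}\otimes e_j,\Phi_t\rangle=\frac{1}{\sqrt{2^{n+1}}}\,\Big\langle e_j,\ \sum_{\tau\in\Gamma_n}(-1)^{\#(\sigma\setminus\tau)}(U_{\tau}^{(\mathfrak{C})})^{t}u_{\tau}\Big\rangle_{\mathcal{K}}.
\]
Finally, summing $|\cdot|^2$ over $j$ and invoking Parseval's identity for the orthonormal basis $\{e_j\}$ of $\mathcal{K}$ converts $\sum_j\big|\langle e_j,w\rangle_{\mathcal{K}}\big|^2$ into $\|w\|_{\mathcal{K}}^2$ with $w=\sum_{\tau\in\Gamma_n}(-1)^{\#(\sigma\setminus\tau)}(U_{\tau}^{(\mathfrak{C})})^{t}u_{\tau}$, which is precisely the claimed formula.

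I do not expect a genuine obstacle here; the argument is essentially bookkeeping built on Theorem~\ref{thr-new-ONB} and Theorem~\ref{thr-property-evolution-1}. The one place deserving care is keeping the inner-product convention (conjugate-linear in the first variable) straight when moving the sum over $\tau$ in and out of $\langle e_j,\cdot\rangle_{\mathcal{K}}$ — which is legitimate only because the coefficients $(-1)^{\#(\sigma\setminus\tau)}$ are real — and confirming that the $u_{\tau}$ arising from the $\widehat{Z}_{\tau}$-expansion of $\Phi_0$ coincide with those written in the statement, which holds by the formula for $u_{\sigma}$ recorded in Remark~\ref{rem-3-3}.
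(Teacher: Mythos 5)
Your proposal is correct and follows essentially the same route as the paper's proof: expand $\Phi_0$ in the $\widehat{Z}_{\tau}$-basis, iterate Theorem~\ref{thr-property-evolution-1} to get $\Phi_t=\sum_{\tau}\widehat{Z}_{\tau}\otimes(U_{\tau}^{(\mathfrak{C})})^{t}u_{\tau}$, and finish with the overlap formula $\langle Z_{\sigma},\widehat{Z}_{\tau}\rangle_{\mathfrak{h}}=(-1)^{\#(\sigma\setminus\tau)}/\sqrt{2^{n+1}}$ and Parseval over $\{e_j\}$. (One minor remark: since the inner product is linear in its second slot, moving the sum over $\tau$ inside $\langle e_j,\cdot\rangle_{\mathcal{K}}$ is valid for arbitrary complex coefficients, so the realness of $(-1)^{\#(\sigma\setminus\tau)}$ is not actually needed there.)
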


\begin{proof}
According to Remark~\ref{rem-3-3}, the initial state $\Phi_0$ has an expansion of the following form
\begin{equation}
  \Phi_0 = \sum_{\tau\in \Gamma_n}\widehat{Z}_{\tau}\otimes u_{\tau},
\end{equation}
which together with Theorem~\ref{thr-property-evolution-1} implies that, at time $t\geq 0$, the walk's state $\Phi_t$ can be expressed as
\begin{equation*}
  \Phi_t= {\mathsf{W}\!_{\mathfrak{C}}}^t\Phi_0
   =  \sum_{\tau\in \Gamma_n}{\mathsf{W}\!_{\mathfrak{C}}}^t(\widehat{Z}_{\tau}\otimes u_{\tau})
   = \sum_{\tau\in \Gamma_n}\widehat{Z}_{\tau}\otimes \big(U_{\tau}^{(\mathfrak{C})}\big)^tu_{\tau}.
\end{equation*}
Let $\sigma\in \Gamma_n$ be given. Then, using the above expression, we have
\begin{equation*}
\begin{split}
  P_t(\sigma\,|\,\Phi_0)
  & = \sum_{j=0}^{d_{\mathcal{K}}-1}|\langle Z_{\sigma}\otimes e_j, \Phi_t\rangle|^2\\
  & = \sum_{j=0}^{d_{\mathcal{K}}-1}\Big|\sum_{\tau\in \Gamma_n}
      \big\langle Z_{\sigma}\otimes e_j, \widehat{Z}_{\tau}\otimes \big(U_{\tau}^{(\mathfrak{C})}\big)^tu_{\tau}\big\rangle\Big|^2\\
  & = \sum_{j=0}^{d_{\mathcal{K}}-1}\Big|\sum_{\tau\in \Gamma_n}\langle Z_{\sigma},  \widehat{Z}_{\tau}\rangle_{\mathfrak{h}}
             \big\langle e_j,\big(U_{\tau}^{(\mathfrak{C})}\big)^tu_{\tau}\big\rangle_{\mathcal{K}}\Big|^2\\
  & = \sum_{j=0}^{d_{\mathcal{K}}-1}\Big| \Big\langle e_j,\sum_{\tau\in \Gamma_n}\langle Z_{\sigma},
       \widehat{Z}_{\tau}\rangle_{\mathfrak{h}}\big(U_{\tau}^{(\mathfrak{C})}\big)^tu_{\tau}\Big\rangle_{\mathcal{K}}\Big|^2\\
  & = \Big\|\sum_{\tau\in \Gamma_n}\langle Z_{\sigma},
       \widehat{Z}_{\tau}\rangle_{\mathfrak{h}}\big(U_{\tau}^{(\mathfrak{C})}\big)^tu_{\tau}\Big\|_{\mathcal{K}}^2,
\end{split}
\end{equation*}
which together with
$\langle Z_{\sigma}, \widehat{Z}_{\tau}\rangle_{\mathfrak{h}} = \frac{1}{\sqrt{2^{n+1}}}(-1)^{\#(\sigma\setminus \tau)}$ yields
\begin{equation*}
  P_t(\sigma\,|\,\Phi_0)
    = \Big\|\sum_{\tau\in \Gamma_n}\langle Z_{\sigma}, \widehat{Z}_{\tau}\rangle_{\mathfrak{h}}\big(U_{\tau}^{(\mathfrak{C})}\big)^tu_{\tau}\Big\|_{\mathcal{K}}^2
  = \frac{1}{2^{n+1}}\Big\|\sum_{\tau\in \Gamma_n}(-1)^{\#(\sigma\setminus \tau)}\big(U_{\tau}^{(\mathfrak{C})}\big)^tu_{\tau}\Big\|_{\mathcal{K}}^2.
\end{equation*}
This completes the proof.
\end{proof}

\begin{definition}
For time $T\geq 1$, the $T$-averaged probability distribution of the walk $\mathsf{W}\!_{\mathfrak{C}}$
is defined as
\begin{equation}\label{eq}
  \overline{P}_T(\sigma\,|\,\Phi_0) = \frac{1}{T}\sum_{t=0}^{T-1}P_t(\sigma\,|\,\Phi_0),\quad \sigma \in \Gamma_n,
\end{equation}
where $P_t(\sigma\,|\,\Phi_0)$ is the probability of finding the walker on vertex $\sigma$ at time $t$ and $\Phi_0$ is the initial state.
\end{definition}

The next result establishes a limit theorem for the walk $\mathsf{W}\!_{\mathfrak{C}}$,
which shows that under some mild conditions the walk has a limit averaged probability distribution as time $T$ goes to $\infty$.

\begin{theorem}\label{limit-thr-1}
Suppose that, for each $\tau\in \Gamma_n$, $u_{\tau}=\sum_{j=0}^{d_{\mathcal{K}}-1} \langle \widehat{Z}_{\tau}\otimes e_j, \Phi_0\rangle e_j$
is an eigenvector of $U_{\tau}^{(\mathfrak{C})}$ with $b_{\tau}$ being the corresponding eigenvalue.
Then, on all vertex $\sigma \in \Gamma_n$, one has
\begin{equation}\label{limit-formula-1}
\lim_{T\to \infty}\overline{P}_T(\sigma\,|\,\Phi_0)
= \frac{1}{2^{n+1}}\Big[1+\sum_{(\tau_1,\tau_2)}(-1)^{\#(\sigma\setminus \tau_1) + \#(\sigma\setminus \tau_2)}
  \langle u_{\tau_1}, u_{\tau_2}\rangle_{\mathcal{K}}\Big],
\end{equation}
where $\sum_{(\tau_1,\tau_2)}$ means to sum over the set
$\big\{(\tau_1, \tau_2) \in \Gamma_n\times \Gamma_n \mid \tau_1\neq \tau_2,\, b_{\tau_1}=b_{\tau_2}\big\}$.
\end{theorem}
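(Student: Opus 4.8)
The plan is to feed the eigenvector hypothesis into the explicit probability formula of Theorem~\ref{prob-distribution-formula} so as to replace each operator power $\big(U_{\tau}^{(\mathfrak{C})}\big)^t$ by a scalar. Since $U_{\tau}^{(\mathfrak{C})}$ is unitary (Proposition~\ref{prop-weighted-sum-of-coin}), its eigenvalue $b_{\tau}$ satisfies $|b_{\tau}|=1$, and $\big(U_{\tau}^{(\mathfrak{C})}\big)^t u_{\tau}=b_{\tau}^t u_{\tau}$. Substituting this, Theorem~\ref{prob-distribution-formula} gives
\begin{equation*}
  P_t(\sigma\,|\,\Phi_0)=\frac{1}{2^{n+1}}\Big\|\sum_{\tau\in\Gamma_n}(-1)^{\#(\sigma\setminus\tau)}b_{\tau}^t u_{\tau}\Big\|_{\mathcal{K}}^2 .
\end{equation*}
First I would expand this squared norm into a double sum over $\Gamma_n\times\Gamma_n$, keeping in mind that the inner product is conjugate-linear in the first argument while the factors $(-1)^{\#(\sigma\setminus\tau)}$ are real; this yields
\begin{equation*}
  P_t(\sigma\,|\,\Phi_0)=\frac{1}{2^{n+1}}\sum_{\tau_1,\tau_2\in\Gamma_n}(-1)^{\#(\sigma\setminus\tau_1)+\#(\sigma\setminus\tau_2)}\big(\overline{b_{\tau_1}}\,b_{\tau_2}\big)^t\langle u_{\tau_1},u_{\tau_2}\rangle_{\mathcal{K}} .
\end{equation*}

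Next I would average over $t=0,1,\dots,T-1$ and interchange the (finite) sum with the average, which reduces the problem to evaluating $\lim_{T\to\infty}\frac{1}{T}\sum_{t=0}^{T-1}\big(\overline{b_{\tau_1}}\,b_{\tau_2}\big)^t$ for each pair. Writing $z=\overline{b_{\tau_1}}\,b_{\tau_2}$ we have $|z|=1$: when $b_{\tau_1}=b_{\tau_2}$ this forces $z=1$ and the average is identically $1$; when $b_{\tau_1}\neq b_{\tau_2}$ we have $z\neq 1$ and the elementary geometric-sum bound $\big|\frac{1}{T}\sum_{t=0}^{T-1}z^t\big|=\frac{1}{T}\big|\frac{1-z^{T}}{1-z}\big|\le\frac{2}{T\,|1-z|}\to 0$ applies. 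As $\Gamma_n\times\Gamma_n$ is finite, the limit passes through the sum term by term, so only the pairs with $b_{\tau_1}=b_{\tau_2}$ survive.

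Finally I would split the surviving sum into its diagonal part ($\tau_1=\tau_2$) and its off-diagonal part ($\tau_1\neq\tau_2$ with $b_{\tau_1}=b_{\tau_2}$). On the diagonal, $(-1)^{2\#(\sigma\setminus\tau)}=1$ and $\langle u_{\tau},u_{\tau}\rangle_{\mathcal{K}}=\|u_{\tau}\|_{\mathcal{K}}^2$, so this part equals $\frac{1}{2^{n+1}}\sum_{\tau\in\Gamma_n}\|u_{\tau}\|_{\mathcal{K}}^2$; by Remark~\ref{rem-3-3}, $\Phi_0=\sum_{\tau\in\Gamma_n}\widehat{Z}_{\tau}\otimes u_{\tau}$ with $\{\widehat{Z}_{\tau}\otimes e_j\}$ orthonormal, hence $\sum_{\tau\in\Gamma_n}\|u_{\tau}\|_{\mathcal{K}}^2=\|\Phi_0\|^2=1$ and the diagonal contributes exactly $\frac{1}{2^{n+1}}$. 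The off-diagonal part is, in the notation of the statement, precisely $\frac{1}{2^{n+1}}\sum_{(\tau_1,\tau_2)}(-1)^{\#(\sigma\setminus\tau_1)+\#(\sigma\setminus\tau_2)}\langle u_{\tau_1},u_{\tau_2}\rangle_{\mathcal{K}}$, and adding the two parts gives (\ref{limit-formula-1}). There is no genuine analytic difficulty, since all index sets are finite and $\mathcal{K}$ is finite-dimensional; the only delicate points are the bookkeeping with the conjugate-linear inner-product convention when expanding the norm, and the geometric-sum estimate that kills the Cesàro mean of $z^t$ for $z$ on the unit circle with $z\neq 1$. I expect the main obstacle to be purely organizational: correctly aligning the diagonal-versus-off-diagonal decomposition with both the normalization $\sum_{\tau}\|u_{\tau}\|_{\mathcal{K}}^2=1$ and the summation set $\{(\tau_1,\tau_2)\mid \tau_1\neq\tau_2,\ b_{\tau_1}=b_{\tau_2}\}$ in (\ref{limit-formula-1}).
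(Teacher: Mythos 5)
Your proposal is correct and follows essentially the same route as the paper's own proof: both expand the squared norm from Theorem~\ref{prob-distribution-formula} into a double sum over $\Gamma_n\times\Gamma_n$, kill the pairs with $b_{\tau_1}\neq b_{\tau_2}$ via the Ces\`aro mean of $\big(\overline{b_{\tau_1}}b_{\tau_2}\big)^t$ on the unit circle, and identify the diagonal contribution with $\sum_{\tau}\|u_{\tau}\|_{\mathcal{K}}^2=1$. The only difference is organizational (you split off the diagonal at the end rather than partitioning into three sets at the outset), so nothing further is needed.
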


\begin{proof}
Divide $\Gamma_n\times \Gamma_n$ into three parts $\Gamma_n\times \Gamma_n = \triangle_1\cup \triangle_2 \cup \triangle_3$, where
$\triangle_1=\{(\tau, \tau) \mid \tau\in  \Gamma_n\}$,
\begin{equation*}
\triangle_2=\{(\tau_1, \tau_2) \in \Gamma_n\times \Gamma_n \mid \tau_1\neq \tau_2,\, b_{\tau_1} = b_{\tau_2}\},\quad
\triangle_3=\{(\tau_1, \tau_2) \in \Gamma_n\times \Gamma_n \mid \tau_1\neq \tau_2,\, b_{\tau_1} \neq b_{\tau_2}\}.
\end{equation*}
Let $\sigma \in \Gamma_n$ be given. Then, by Theorem~\ref{prob-distribution-formula}
as well as the equality $\sum_{\tau\in \Gamma_n}\|u_{\tau}\|_{\mathcal{K}}^2=1$, we have
\begin{equation*}
\begin{split}
P_t(\sigma\,|\,\Phi_0)
 &= \frac{1}{2^{n+1}}\Big\|\sum_{\tau\in \Gamma_n}(-1)^{\#(\sigma\setminus \tau)} b_{\tau}^tu_{\tau}\Big\|_{\mathcal{K}}^2\\
 &= \frac{1}{2^{n+1}}\Big[\sum_{(\tau_1,\tau_2)\in \triangle_1} + \sum_{(\tau_1,\tau_2)\in \triangle_2} + \sum_{(\tau_1,\tau_2)\in \triangle_3} \Big]
    (-1)^{\#(\sigma\setminus \tau_1) + \#(\sigma\setminus \tau_2)}\overline{b_{\tau_1}}^tb_{\tau_2}^t\langle u_{\tau_1}, u_{\tau_2}\rangle_{\mathcal{K}}\\
&= \frac{1}{2^{n+1}}\Big[\sum_{\tau\in \Gamma_n}\|u_{\tau}\|_{\mathcal{K}}^2
    +\sum_{(\tau_1,\tau_2)\in \triangle_2}(-1)^{\#(\sigma\setminus \tau_1) + \#(\sigma\setminus \tau_2)}\langle u_{\tau_1}, u_{\tau_2}\rangle_{\mathcal{K}}\\
&\qquad\qquad\ \  +\sum_{(\tau_1,\tau_2)\in \triangle_3}(-1)^{\#(\sigma\setminus \tau_1) + \#(\sigma\setminus \tau_2)}\overline{b_{\tau_1}}^tb_{\tau_2}^t
       \langle u_{\tau_1}, u_{\tau_2}\rangle_{\mathcal{K}}\Big]\\
&= \frac{1}{2^{n+1}}\Big[1
    +\sum_{(\tau_1,\tau_2)\in \triangle_2}(-1)^{\#(\sigma\setminus \tau_1) + \#(\sigma\setminus \tau_2)}\langle u_{\tau_1}, u_{\tau_2}\rangle_{\mathcal{K}}\Big]\\
&\qquad\qquad\ \  +\frac{1}{2^{n+1}}\sum_{(\tau_1,\tau_2)\in \triangle_3}(-1)^{\#(\sigma\setminus \tau_1) + \#(\sigma\setminus \tau_2)}\overline{b_{\tau_1}}^tb_{\tau_2}^t
     \langle u_{\tau_1}, u_{\tau_2}\rangle_{\mathcal{K}},
\end{split}
\end{equation*}
which implies that
\begin{equation*}
  \begin{split}
\overline{P}_T(\sigma\,|\,\Phi_0)
 &= \frac{1}{2^{n+1}}\Big[1+\sum_{(\tau_1,\tau_2)\in \triangle_2}(-1)^{\#(\sigma\setminus \tau_1) + \#(\sigma\setminus \tau_2)}
       \langle u_{\tau_1}, u_{\tau_2}\rangle_{\mathcal{K}}\Big]\\
 &\qquad\qquad\ \  +\frac{1}{2^{n+1}}\sum_{(\tau_1,\tau_2)\in \triangle_3}(-1)^{\#(\sigma\setminus \tau_1) + \#(\sigma\setminus \tau_2)}
       \langle u_{\tau_1}, u_{\tau_2}\rangle_{\mathcal{K}}\frac{1}{T}\sum_{t=0}^{T-1}\overline{b_{\tau_1}}^tb_{\tau_2}^t,
\end{split}
\end{equation*}
where $T\geq 1$. Note that
\begin{equation*}
 \lim_{T\to \infty} \frac{1}{T}\sum_{t=0}^{T-1}\overline{b_{\tau_1}}^tb_{\tau_2}^t
  =  \lim_{T\to \infty}\frac{1}{T}\frac{1-\big(\overline{b_{\tau_1}}b_{\tau_2}\big)^T}{1-\overline{b_{\tau_1}}b_{\tau_2}}
  =0.
\end{equation*}
Thus
\begin{equation*}
  \lim_{T\to \infty} \overline{P}_T(\sigma\,|\,\Phi_0)
   = \frac{1}{2^{n+1}}\Big[1+\sum_{(\tau_1,\tau_2)\in \triangle_2}(-1)^{\#(\sigma\setminus \tau_1) + \#(\sigma\setminus \tau_2)}
     \langle u_{\tau_1}, u_{\tau_2}\rangle_{\mathcal{K}}\Big],
\end{equation*}
which is the same as (\ref{limit-formula-1}).
\end{proof}

\begin{remark}\label{rem-3-4}
For $\gamma\in \Gamma_n$, let $v_{\gamma}$ be an eigenvector of $U_{\gamma}^{(\mathfrak{C})}$ with $b_{\gamma}$ being the corresponding eigenvalue
(such an eigenvector does exist because the coin space $\mathcal{K}$ is finite-dimensional
and $U_{\gamma}^{(\mathfrak{C})}$ is a unitary operator on $\mathcal{K}$).
Put
\begin{equation*}
  \Phi_0= M_0^{-\frac{1}{2}}\sum_{\gamma\in \Gamma_n} \widehat{Z}_{\gamma}\otimes v_{\gamma},
\end{equation*}
where $M_0= \sum_{\gamma\in \Gamma_n}\|v_{\gamma}\|_{\mathcal{K}}^2$. Then, $\Phi_0$ is a unit vector in $\mathfrak{h}_n\otimes \mathcal{K}$,
hence can serve as an initial state of the walk $\mathsf{W}\!_{\mathfrak{C}}$. Moreover, for each $\tau\in \Gamma_n$, we have
\begin{equation*}
  u_{\tau}=\sum_{j=0}^{d_{\mathcal{K}}-1} \langle \widehat{Z}_{\tau}\otimes e_j, \Phi_0\rangle e_j = M_0^{-\frac{1}{2}} v_{\tau},
\end{equation*}
which implies that $u_{\tau}$ is an eigenvector of $U_{\tau}^{(\mathfrak{C})}$ with $b_{\tau}$ being the corresponding eigenvalue.
This shows that the assumptions made in Theorem~\ref{limit-thr-1} can be satisfied.
\end{remark}

As an immediate consequence of Theorem~\ref{limit-thr-1}, the next theorem shows that the limit averaged probability distribution
of the walk $\mathsf{W}\!_{\mathfrak{C}}$ even coincides with the uniform probability distribution on $\Gamma_n$.

\begin{theorem}\label{limit-thr-2}
Suppose that, for each $\tau\in \Gamma_n$, $u_{\tau}=\sum_{j=0}^{d_{\mathcal{K}}-1} \langle \widehat{Z}_{\tau}\otimes e_j, \Phi_0\rangle e_j$
is an eigenvector of $U_{\tau}^{(\mathfrak{C})}$ with $b_{\tau}$ being the corresponding eigenvalue,
and moreover $b_{\tau_1}\neq b_{\tau_2}$ when $\tau_1$ $\tau_2\in \Gamma_n$ with $\tau_1\neq \tau_2$.
Then, it holds true that
\begin{equation}\label{limit-formula-2}
\lim_{T\to \infty}\overline{P}_T(\sigma\,|\,\Phi_0)
= \frac{1}{2^{n+1}},\quad  \forall\, \sigma \in \Gamma_n.
\end{equation}
\end{theorem}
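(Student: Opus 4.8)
The plan is to obtain this statement as an immediate consequence of Theorem~\ref{limit-thr-1}. First I would note that the hypotheses assumed here strictly contain those of Theorem~\ref{limit-thr-1}: we are still assuming that for each $\tau\in\Gamma_n$ the vector $u_{\tau}=\sum_{j=0}^{d_{\mathcal{K}}-1}\langle\widehat{Z}_{\tau}\otimes e_j,\Phi_0\rangle e_j$ is an eigenvector of $U_{\tau}^{(\mathfrak{C})}$ with eigenvalue $b_{\tau}$, so the limit formula~(\ref{limit-formula-1}) is available verbatim on every vertex $\sigma\in\Gamma_n$.

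Next I would examine the index set over which the correction term in~(\ref{limit-formula-1}) is summed, namely $\big\{(\tau_1,\tau_2)\in\Gamma_n\times\Gamma_n \mid \tau_1\neq\tau_2,\ b_{\tau_1}=b_{\tau_2}\big\}$. The extra assumption that $b_{\tau_1}\neq b_{\tau_2}$ whenever $\tau_1,\tau_2\in\Gamma_n$ with $\tau_1\neq\tau_2$ makes this set empty, so the sum $\sum_{(\tau_1,\tau_2)}$ in~(\ref{limit-formula-1}) is an empty sum and hence equals $0$. Substituting into~(\ref{limit-formula-1}) leaves $\lim_{T\to\infty}\overline{P}_T(\sigma\,|\,\Phi_0)=\frac{1}{2^{n+1}}\big[1+0\big]=\frac{1}{2^{n+1}}$ for every $\sigma\in\Gamma_n$, which is exactly~(\ref{limit-formula-2}).

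There is no genuine obstacle in the argument; conceptually, the diagonal pairs contribute the constant $1$, the pairs with distinct eigenvalues wash out under the Ces\`aro average (the geometric-series estimate already carried out in the proof of Theorem~\ref{limit-thr-1}), and the distinctness hypothesis removes precisely the off-diagonal pairs that would otherwise survive. The only minor point worth addressing, to keep the theorem from being vacuous, is to remark — adapting Remark~\ref{rem-3-4} — that one can indeed choose the initial state $\Phi_0$ so that all the eigenvalues $b_{\tau}$ are pairwise distinct, which is possible by selecting for each $\gamma\in\Gamma_n$ an eigenvector $v_{\gamma}$ of $U_{\gamma}^{(\mathfrak{C})}$ whose eigenvalues are all different (available in generic situations, e.g. when the spectra of the $U_{\gamma}^{(\mathfrak{C})}$ are rich enough) and then normalizing $\Phi_0=M_0^{-1/2}\sum_{\gamma}\widehat{Z}_{\gamma}\otimes v_{\gamma}$.
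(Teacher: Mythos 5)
Your proposal is correct and matches the paper's own proof exactly: both deduce the result from Theorem~\ref{limit-thr-1} by observing that the distinctness hypothesis on the eigenvalues $b_{\tau}$ makes the index set $\{(\tau_1,\tau_2)\mid \tau_1\neq\tau_2,\ b_{\tau_1}=b_{\tau_2}\}$ empty, so the correction term vanishes. Your closing remark on non-vacuousness is a sensible addition but not part of the paper's argument, which handles that point separately in Example~\ref{exam-3-1}.
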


\begin{proof}
It follows from the conditions that $\big\{(\tau_1, \tau_2) \in \Gamma_n\times \Gamma_n \mid \tau_1\neq \tau_2,\, b_{\tau_1}=b_{\tau_2} \big\}=\emptyset$,
which together with Theorem~\ref{limit-thr-1} implies (\ref{limit-formula-2}).
\end{proof}

\begin{remark}\label{rem-3-5}
The unitary operators $\big\{U_{\tau}^{(\mathfrak{C})} \mid \tau \in \Gamma_n\big\}$ can be viewed as the evolution operator's ``components in the coin space''.
Similarly, the vectors $\{u_{\tau} \mid \tau \in \Gamma_n\big\}$ can be viewed as the initial state's ``components in the coin space''.
Theorem~\ref{prob-distribution-formula}, Theorem~\ref{limit-thr-1} and Theorem~\ref{limit-thr-2}
suggest that the walk's ``components in the coin space'' are actually the determining factors of
its probability distributions.
\end{remark}

\subsection{The uniform measure as a stationary measure}\label{subsec-3-5}

The present subsection shows that the walk $\mathsf{W}\!_{\mathfrak{C}}$ produces the uniform measure as its stationary measure on $\Gamma_n$
provided its initial state satisfies some mild conditions.

In this subsection, we assume that $n\geq 0$ is a fixed nonnegative integer and the coin space $\mathcal{K}$ is a finite-dimensional Hilbert space with $d_{\mathcal{K}}\equiv\dim \mathcal{K}\geq 2^{n+1}$.
Additionally, we assume that that $\mathfrak{C}=\{C_k \mid 0\leq k \leq n\}$ is a given coin operator system on $\mathcal{K}$.
As above, $\Phi_0$ always denotes the initial state of the walk $\mathsf{W}\!_{\mathfrak{C}}$ below.

A probability measure (measure for short) $\nu$ on $\Gamma_n$ is a nonnegative function $\nu\colon \Gamma_n \rightarrow \mathbb{R}_+$ satisfying
that $\sum_{\sigma\in \Gamma_n}\nu(\sigma)=1$. The uniform measure $\kappa_n$ on $\Gamma_n$ is the measure given by
\begin{equation}\label{eq}
  \kappa_n(\sigma) = \frac{1}{2^{n+1}},\quad \sigma\in \Gamma_n.
\end{equation}

\begin{definition}
A measure $\nu$ on $\Gamma_n$ is called a stationary measure of the walk $\mathsf{W}\!_{\mathfrak{C}}$ if there exists
some unit vector $\Psi\in \mathfrak{h}_n\otimes \mathcal{K}$ such that
\begin{equation}\label{eq}
  \nu(\sigma)= \sum_{j=0}^{d_{\mathcal{K}}-1} |\langle Z_{\sigma}\otimes e_j, {\mathsf{W}\!_{\mathfrak{C}}}^t\Psi\rangle|^2,\quad
  \forall\, \sigma\in \Gamma_n,\, \forall\, t\geq 0.
\end{equation}
Furthermore, if the stationary measure $\nu$ coincides with the uniform measure $\kappa_n$,
then we say that the walk $\mathsf{W}\!_{\mathfrak{C}}$ with $\Phi_0=\Psi$ produces the uniform measure as its stationary measure on $\Gamma_n$.
\end{definition}

\begin{theorem}\label{thr-stationary-distribution-1}
For all $\gamma\in \Gamma_n$ and all $u\in \mathcal{K}$ with $\|u\|_{\mathcal{K}}^2=1$,
the walk $\mathsf{W}\!_{\mathfrak{C}}$ with $\Phi_0=\widehat{Z}_{\gamma}\otimes u$ produces the uniform measure as its stationary measure on $\Gamma_n$.
In particular, the uniform measure $\kappa_n$ on $\Gamma_n$ is a stationary measure of the walk $\mathsf{W}\!_{\mathfrak{C}}$.
\end{theorem}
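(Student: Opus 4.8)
The plan is to exploit Theorem~\ref{thr-property-evolution-1}, which says that a vector of the form $\widehat{Z}_{\gamma}\otimes u$ is left essentially invariant by $\mathsf{W}\!_{\mathfrak{C}}$, the only effect being a unitary twist $U_{\gamma}^{(\mathfrak{C})}$ inside the coin space. First I would fix $\gamma\in\Gamma_n$ and $u\in\mathcal{K}$ with $\|u\|_{\mathcal{K}}^2=1$, and note that $\Phi_0=\widehat{Z}_{\gamma}\otimes u$ is a unit vector of $\mathfrak{h}_n\otimes\mathcal{K}$ since $\widehat{Z}_{\gamma}$ is a unit vector by Theorem~\ref{thr-new-ONB}; hence $\Phi_0$ is a legitimate initial state for the walk.

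Next I would iterate Theorem~\ref{thr-property-evolution-1}: a straightforward induction on $t$ gives
\[
  \mathsf{W}\!_{\mathfrak{C}}^{\,t}(\widehat{Z}_{\gamma}\otimes u)=\widehat{Z}_{\gamma}\otimes\big(U_{\gamma}^{(\mathfrak{C})}\big)^t u,\qquad t\geq 0.
\]
Then, for $\sigma\in\Gamma_n$ and $t\geq 0$, I would expand $P_t(\sigma\,|\,\Phi_0)$ from its definition. Because the position and coin factors separate, each term $\langle Z_{\sigma}\otimes e_j,\ \widehat{Z}_{\gamma}\otimes(U_{\gamma}^{(\mathfrak{C})})^t u\rangle$ factors as $\langle Z_{\sigma},\widehat{Z}_{\gamma}\rangle_{\mathfrak{h}}\,\langle e_j,(U_{\gamma}^{(\mathfrak{C})})^t u\rangle_{\mathcal{K}}$, so summing the squared moduli over $j$ yields
\[
  P_t(\sigma\,|\,\Phi_0)=\big|\langle Z_{\sigma},\widehat{Z}_{\gamma}\rangle_{\mathfrak{h}}\big|^2\ \big\|\big(U_{\gamma}^{(\mathfrak{C})}\big)^t u\big\|_{\mathcal{K}}^2 .
\]
Now both factors are immediate: the inner-product formula of Subsection~\ref{subsec-3-4} gives $|\langle Z_{\sigma},\widehat{Z}_{\gamma}\rangle_{\mathfrak{h}}|^2=2^{-(n+1)}$ for every $\sigma$, independently of $\gamma$; and since $U_{\gamma}^{(\mathfrak{C})}$ is unitary (Proposition~\ref{prop-weighted-sum-of-coin}), $\|(U_{\gamma}^{(\mathfrak{C})})^t u\|_{\mathcal{K}}=\|u\|_{\mathcal{K}}=1$. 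Hence $P_t(\sigma\,|\,\Phi_0)=2^{-(n+1)}=\kappa_n(\sigma)$ for all $\sigma\in\Gamma_n$ and all $t\geq 0$. Since this value does not depend on $t$, the uniform measure $\kappa_n$ meets the defining condition of a stationary measure with witness $\Psi=\Phi_0$, and it coincides with $\kappa_n$ by construction, proving the first assertion; the ``in particular'' clause follows by picking any single such $\gamma$ and $u$.

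I do not anticipate a serious obstacle: the argument is a short computation once Theorem~\ref{thr-property-evolution-1} and the inner-product formula are available. The only point that needs a moment of care is justifying the position--coin factorization of $\langle Z_{\sigma}\otimes e_j,\ \widehat{Z}_{\gamma}\otimes w\rangle$ and, above all, observing that the constant $|\langle Z_{\sigma},\widehat{Z}_{\gamma}\rangle_{\mathfrak{h}}|^2$ is genuinely independent of \emph{both} $\sigma$ and $\gamma$ --- this uniformity in $\sigma$ is precisely what forces the distribution to equal $\kappa_n$ at every time $t$, and the independence of $t$ (thanks to unitarity of $U_{\gamma}^{(\mathfrak{C})}$) is what makes it stationary.
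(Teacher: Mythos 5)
Your proposal is correct and follows essentially the same route as the paper's own proof: iterate Theorem~\ref{thr-property-evolution-1} to get $\mathsf{W}\!_{\mathfrak{C}}^{\,t}(\widehat{Z}_{\gamma}\otimes u)=\widehat{Z}_{\gamma}\otimes(U_{\gamma}^{(\mathfrak{C})})^t u$, factor the inner products, sum over the coin basis, and invoke unitarity of $U_{\gamma}^{(\mathfrak{C})}$ together with $|\langle Z_{\sigma},\widehat{Z}_{\gamma}\rangle_{\mathfrak{h}}|^2=2^{-(n+1)}$. No gaps.
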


\begin{proof}
Let $\gamma\in \Gamma_n$ and $u\in \mathcal{K}$ with $\|u\|_{\mathcal{K}}^2=1$ be given. Then, for all $\sigma\in \Gamma_n$ and $t\geq 0$,
by Theorem~\ref{thr-property-evolution-1} we have
\begin{equation*}
\begin{split}
 \sum_{j=0}^{d_{\mathcal{K}}-1} \big|\big\langle Z_{\sigma}\otimes e_j, {\mathsf{W}\!_{\mathfrak{C}}}^t(\widehat{Z}_{\gamma}\otimes u)\big\rangle\big|^2
  & =\sum_{j=0}^{d_{\mathcal{K}}-1} |\langle Z_{\sigma}\otimes e_j, \widehat{Z}_{\gamma}\otimes \big(U_{\gamma}^{(\mathfrak{C})}\big)^tu\rangle|^2\\
  & =\sum_{j=0}^{d_{\mathcal{K}}-1} \big|\langle Z_{\sigma}, \widehat{Z}_{\gamma}\rangle_{\mathfrak{h}}
      \big\langle e_j, \big(U_{\gamma}^{(\mathfrak{C})}\big)^tu\big\rangle_{\mathcal{K}}\big|^2\\
  & =|\langle Z_{\sigma}, \widehat{Z}_{\gamma}\rangle_{\mathfrak{h}}|^2\big\|(U_{\gamma}^{(\mathfrak{C})}\big)^tu\big\|_{\mathcal{K}}^2\\
  & = |\langle Z_{\sigma}, \widehat{Z}_{\gamma}\rangle_{\mathfrak{h}}|^2\|u\|_{\mathcal{K}}^2\\
  & = |\langle Z_{\sigma}, \widehat{Z}_{\gamma}\rangle_{\mathfrak{h}}|^2,
\end{split}
\end{equation*}
which together with $\langle Z_{\sigma}, \widehat{Z}_{\gamma}\rangle_{\mathfrak{h}} =\frac{(-1)^{\#(\sigma\setminus \gamma)}}{\sqrt{2^{n+1}}}$ gives
\begin{equation*}
   \sum_{j=0}^{d_{\mathcal{K}}-1} \big|\big\langle Z_{\sigma}\otimes e_j, {\mathsf{W}\!_{\mathfrak{C}}}^t(\widehat{Z}_{\gamma}\otimes u)\big\rangle\big|^2
   =|\langle Z_{\sigma}, \widehat{Z}_{\gamma}\rangle_{\mathfrak{h}}|^2
   = \frac{1}{2^{n+1}}.
\end{equation*}
This exactly means that walk $\mathsf{W}\!_{\mathfrak{C}}$ with $\Phi_0=\widehat{Z}_{\gamma}\otimes u$ produces the uniform measure
as its stationary measure on $\Gamma_n$.
\end{proof}

%As is seen in Theorem~\ref{thr-stationary-distribution-1}, for some ``simple'' initial states, the walk $\mathsf{W}\!_{\mathfrak{C}}$ produces the uniform measure.

Recall that, for $\tau\in \Gamma_n$, $U_{\tau}^{(\mathfrak{C})}$ is the $\varepsilon_{\tau}$-weighted sum of the coin operator system $\mathfrak{C}$
(see Definition~\ref{def-weighted-sum-of-coin}).
The next theorem shows that,
even for some ``complicated'' initial states, the walk $\mathsf{W}\!_{\mathfrak{C}}$ still produces the uniform measure.

\begin{theorem}\label{thr-unifotm-distribution}
Let $\Psi\in \mathfrak{h}_n\otimes \mathcal{K}$ be a unit vector
and $u_{\tau}=\sum_{j=0}^{d_{\mathcal{K}}-1} \langle \widehat{Z}_{\tau}\otimes e_j, \Psi\rangle e_j$ for $\tau\in \Gamma_n$.
Suppose further that
\begin{enumerate}
  \item[(1)] for each $\tau\in \Gamma_n$, $u_{\tau}$ is an eigenvector of the unitary operator $U_{\tau}^{(\mathfrak{C})}$;
  \item[(2)] $\langle u_{\tau_1},u_{\tau_2} \rangle_{\mathcal{K}}=0$ for $\tau_1$, $\tau_2\in \Gamma_n$ with $\tau_1\neq \tau_2$.
\end{enumerate}
Then the walk $\mathsf{W}\!_{\mathfrak{C}}$ with $\Phi_0=\Psi$ produces the uniform measure as its stationary measure on $\Gamma_n$.
\end{theorem}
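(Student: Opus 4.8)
The plan is to reduce everything to an explicit expression for $\mathsf{W}\!_{\mathfrak{C}}^t\Psi$ and then exploit the orthogonality hypothesis (2) to make the time parameter disappear entirely. First I would expand $\Psi$ in the orthonormal basis $\{\widehat{Z}_{\tau}\otimes e_j \mid \tau\in\Gamma_n,\ 0\leq j\leq d_{\mathcal{K}}-1\}$ (legitimate by Remark~\ref{rem-3-3}), obtaining $\Psi=\sum_{\tau\in\Gamma_n}\widehat{Z}_{\tau}\otimes u_{\tau}$ with $\sum_{\tau\in\Gamma_n}\|u_{\tau}\|_{\mathcal{K}}^2=\|\Psi\|^2=1$. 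By hypothesis (1) there are scalars $b_{\tau}$ with $U_{\tau}^{(\mathfrak{C})}u_{\tau}=b_{\tau}u_{\tau}$, and since $U_{\tau}^{(\mathfrak{C})}$ is unitary (Proposition~\ref{prop-weighted-sum-of-coin}) we have $|b_{\tau}|=1$ for every $\tau$ with $u_{\tau}\neq 0$.

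Next I would iterate Theorem~\ref{thr-property-evolution-1}: since $\mathsf{W}\!_{\mathfrak{C}}(\widehat{Z}_{\tau}\otimes u)=\widehat{Z}_{\tau}\otimes(U_{\tau}^{(\mathfrak{C})}u)$, induction on $t$ gives $\mathsf{W}\!_{\mathfrak{C}}^t(\widehat{Z}_{\tau}\otimes u_{\tau})=\widehat{Z}_{\tau}\otimes\big(U_{\tau}^{(\mathfrak{C})}\big)^tu_{\tau}=\widehat{Z}_{\tau}\otimes b_{\tau}^tu_{\tau}$, hence $\mathsf{W}\!_{\mathfrak{C}}^t\Psi=\sum_{\tau\in\Gamma_n}\widehat{Z}_{\tau}\otimes b_{\tau}^tu_{\tau}$. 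Feeding this into the defining formula for the probability distribution (exactly as in the proof of Theorem~\ref{prob-distribution-formula}, using $\langle Z_{\sigma},\widehat{Z}_{\tau}\rangle_{\mathfrak{h}}=(-1)^{\#(\sigma\setminus\tau)}/\sqrt{2^{n+1}}$) yields, for every $\sigma\in\Gamma_n$ and every $t\geq 0$,
\begin{equation*}
\sum_{j=0}^{d_{\mathcal{K}}-1}\big|\langle Z_{\sigma}\otimes e_j,\mathsf{W}\!_{\mathfrak{C}}^t\Psi\rangle\big|^2
=\frac{1}{2^{n+1}}\Big\|\sum_{\tau\in\Gamma_n}(-1)^{\#(\sigma\setminus\tau)}b_{\tau}^tu_{\tau}\Big\|_{\mathcal{K}}^2 .
\end{equation*}

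Finally I would expand the squared norm as a double sum over $(\tau_1,\tau_2)\in\Gamma_n\times\Gamma_n$, with summand $(-1)^{\#(\sigma\setminus\tau_1)+\#(\sigma\setminus\tau_2)}\,\overline{b_{\tau_1}}^t b_{\tau_2}^t\,\langle u_{\tau_1},u_{\tau_2}\rangle_{\mathcal{K}}$. Hypothesis (2) kills every off-diagonal term, leaving $\sum_{\tau\in\Gamma_n}|b_{\tau}|^{2t}\|u_{\tau}\|_{\mathcal{K}}^2=\sum_{\tau\in\Gamma_n}\|u_{\tau}\|_{\mathcal{K}}^2=1$, where the middle equality uses $|b_{\tau}|=1$. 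Therefore the quantity above equals $\frac{1}{2^{n+1}}=\kappa_n(\sigma)$ for all $\sigma$ and all $t\geq 0$, which is precisely the statement that the walk $\mathsf{W}\!_{\mathfrak{C}}$ with $\Phi_0=\Psi$ produces the uniform measure as its stationary measure on $\Gamma_n$. There is no serious obstacle here; the only point worth stressing is the use of unitarity to force $|b_{\tau}|=1$, which is what makes the distribution genuinely time-independent rather than merely convergent in the Cesàro sense (contrast Theorem~\ref{limit-thr-1}), so that hypotheses (1) and (2) together are exactly what is needed.
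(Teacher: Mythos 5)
Your proposal is correct and follows essentially the same route as the paper: the paper likewise invokes Theorem~\ref{prob-distribution-formula} (which you re-derive inline via Theorem~\ref{thr-property-evolution-1}), substitutes $\big(U_{\tau}^{(\mathfrak{C})}\big)^t u_{\tau}=b_{\tau}^t u_{\tau}$, and uses hypothesis (2) together with $|b_{\tau}|=1$ and $\sum_{\tau}\|u_{\tau}\|_{\mathcal{K}}^2=1$ to obtain the constant value $1/2^{n+1}$. If anything, you spell out the orthogonality expansion of the squared norm more explicitly than the paper does.
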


\begin{proof}
Let $\Phi_0=\Psi$ and denote by $b_{\tau}$ the eigenvalue corresponding to the eigenvector $u_{\tau}$.
Then, for all $\sigma\in \Gamma_n$ and $t\geq 0$, by Theorem~\ref{prob-distribution-formula} we immediately have
\begin{equation*}
 \sum_{j=0}^{d_{\mathcal{K}}-1} |\langle Z_{\sigma}\otimes e_j, {\mathsf{W}\!_{\mathfrak{C}}}^t\Psi\rangle|^2
  = P_t(\sigma\,|\,\Phi_0)
  = \frac{1}{2^{n+1}}\Big\|\sum_{\tau\in \Gamma_n}(-1)^{\#(\sigma\setminus \tau)} b_{\tau}^tu_{\tau}\Big\|_{\mathcal{K}}^2
  = \frac{1}{2^{n+1}}\sum_{\tau\in \Gamma_n}\|u_{\tau}\|_{\mathcal{K}}^2,
\end{equation*}
which together with
\begin{equation*}
  \sum_{\tau\in \Gamma_n}\|u_{\tau}\|_{\mathcal{K}}^2
  = \sum_{\tau\in \Gamma_n}\sum_{j=0}^{d_{\mathcal{K}}-1} |\langle \widehat{Z}_{\tau}\otimes e_j, \Psi\rangle|^2
  = \|\Psi\|^2
  = 1
\end{equation*}
yields
\begin{equation*}
  \sum_{j=0}^{d_{\mathcal{K}}-1} |\langle Z_{\sigma}\otimes e_j, {\mathsf{W}\!_{\mathfrak{C}}}^t\Psi\rangle|^2
 = \frac{1}{2^{n+1}}\sum_{\tau\in \Gamma_n}\|u_{\tau}\|_{\mathcal{K}}^2
 = \frac{1}{2^{n+1}}.
\end{equation*}
This exactly means that the walk $\mathsf{W}\!_{\mathfrak{C}}$ with $\Phi_0=\Psi$ produces the uniform measure as its stationary measure on $\Gamma_n$.
\end{proof}

\subsection{Examples}\label{subsec-example}

In the final subsection, we offer some examples to show that the assumptions made in Theorem~\ref{limit-thr-2} and Theorem~\ref{thr-unifotm-distribution} can be satisfied.

Consider the graph $(\Gamma_1, \mathfrak{E}_1)$, which is actually isomorphic to the $2$-dimensional hypercube.
In this case, the position space of the walk is just
$\mathfrak{h}_1 = \mathrm{span}\big\{Z_{\sigma}\mid \sigma \in \Gamma_1\big\}$.
Note that
\begin{equation}\label{eq-3-29}
\Gamma_1 =\big\{\emptyset, \{0\}, \{1\}, \{0,1\}\big\}.
\end{equation}
Additionally, the graph $(\Gamma_1, \mathfrak{E}_1)$ is regular and its degree is exactly $2$.

\begin{example}\label{exam-3-1}
Take $\mathbb{C}^2$ as the coin space, namely $\mathcal{K}=\mathbb{C}^2$. Then, one has a coin operator system
$\mathfrak{C}=\{C_0, C_1\}$ on $\mathcal{K}=\mathbb{C}^2$,
where
\begin{equation}\label{eq-3-30}
  C_0 =
    \left(
    \begin{matrix}
    0 & 1\\
    0 & 0
   \end{matrix}
   \right),\quad
  C_1=
    \left(
    \begin{matrix}
    0 & 0\\
    1 & 0
   \end{matrix}
   \right).
\end{equation}
In this case, the state space of the walk is
$\mathfrak{h}_1\otimes \mathcal{K} = \mathfrak{h}_1\otimes \mathbb{C}^2$, while the evolution operator $\mathsf{W}\!_{\mathfrak{C}}$
takes the form
\begin{equation}\label{eq-3-31}
  \mathsf{W}\!_{\mathfrak{C}}
    = (\partial_0^*+\partial_0)\otimes C_0 + (\partial_1^*+\partial_1)\otimes C_1.
\end{equation}
According to Definition~\ref{def-weighted-sum-of-coin}, the weighted sums of the coin operator system $\mathfrak{C}$ are
\begin{equation*}
U_{\emptyset}^{(\mathfrak{C})}=
  \left(
    \begin{matrix}
    0 & -1\\
    -1 & 0
   \end{matrix}
   \right),\quad
U_{\{0\}}^{(\mathfrak{C})}=
  \left(
    \begin{matrix}
    0 & 1\\
    -1 & 0
   \end{matrix}
   \right),\quad
U_{\{1\}}^{(\mathfrak{C})}=
  \left(
    \begin{matrix}
    0 & -1\\
    1 & 0
   \end{matrix}
   \right),\quad
U_{\{0,1\}}^{(\mathfrak{C})}=
  \left(
    \begin{matrix}
    0 & 1\\
    1 & 0
   \end{matrix}
   \right).
\end{equation*}
By careful calculations, one can immediately get spectrums of these unitary operators (matrices):
\begin{equation*}
  \mathrm{Spec}\big(U_{\emptyset}^{(\mathfrak{C})}\big) = \mathrm{Spec}\big(U_{\{0,1\}}^{(\mathfrak{C})}\big) =\{-1, 1\},\quad
  \mathrm{Spec}\big(U_{\{0\}}^{(\mathfrak{C})}\big) = \mathrm{Spec}\big(U_{\{1\}}^{(\mathfrak{C})}\big) =\{-\mathrm{i}, \mathrm{i}\},
\end{equation*}
where $\mathrm{Spec}(A)$ means the spectrum of an operator $A$ acting on $\mathcal{K}=\mathbb{C}^2$.

Let $v_{\emptyset}\in \mathbb{C}^2$ be an eigenvector of $U_{\emptyset}^{(\mathfrak{C})}$ with $b_{\emptyset} = -1$ being the corresponding eigenvalue,
$v_{\{0\}}\in \mathbb{C}^2$ be an eigenvector of $U_{\{0\}}^{(\mathfrak{C})}$ with $b_{\{0\}} = -\mathrm{i}$ being the corresponding eigenvalue,
$v_{\{1\}}\in \mathbb{C}^2$ be an eigenvector of $U_{\{1\}}^{(\mathfrak{C})}$ with $b_{\{1\}} = \mathrm{i}$ being the corresponding eigenvalue,
and $v_{\{0,1\}}\in \mathbb{C}^2$ be an eigenvector of $U_{\{0,1\}}^{(\mathfrak{C})}$ with $b_{\{0,1\}} = 1$ being the corresponding eigenvalue.
Put
\begin{equation*}
  \Phi_0= M_0^{-\frac{1}{2}}\sum_{\gamma\in \Gamma_1} \widehat{Z}_{\gamma}\otimes v_{\gamma},
\end{equation*}
where $M_0= \sum_{\gamma\in \Gamma_1}\|v_{\gamma}\|_{\mathbb{C}^2}^2$. Then, $\Phi_0$ is a unit vector in $\mathfrak{h}_1\otimes \mathcal{K}= \mathfrak{h}_1\otimes \mathbb{C}^2$, hence can serve as an initial state. Furthermore, for each $\tau \in \Gamma_1=\big\{\emptyset, \{0\}, \{1\}, \{0,1\}\big\}$, we have
\begin{equation*}
  u_{\tau}=\sum_{j=0}^{d_{\mathcal{K}}-1} \langle \widehat{Z}_{\tau}\otimes e_j, \Phi_0\rangle e_j = M_0^{-\frac{1}{2}}v_{\tau},
\end{equation*}
which implies that $U_{\tau}^{(\mathfrak{C})}u_{\tau}=b_{\tau} u_{\tau}$, namely $u_{\tau}$ is an eigenvector of $U_{\tau}^{(\mathfrak{C})}$
with $b_{\tau}$ being the corresponding eigenvalue. Clearly, $b_{\tau_1}\neq b_{\tau_2}$ for $\tau_1$, $\tau_1\in \Gamma_1$
with $\tau_1\neq \tau_2$.
This shows that the assumptions made in Theorem~\ref{limit-thr-2} can be satisfied.
\end{example}

\begin{example}\label{exam-3-2}
Take $\mathbb{C}^4$ as the coin space, namely $\mathcal{K}=\mathbb{C}^4$. Then, one also has a coin operator system
$\mathfrak{C}=\{C_0, C_1\}$ on $\mathcal{K}=\mathbb{C}^4$,
where
\begin{equation}\label{eq-3-32}
  C_0 =
    \left(
    \begin{matrix}
    1 & 0 & 0 & 0\\
    0 & 1 & 0 & 0\\
    0 & 0 & 0 & 0\\
    0 & 0 & 0 & 0
   \end{matrix}
   \right),\quad
  C_1=
    \left(
    \begin{matrix}
    0 & 0 & 0 & 0\\
    0 & 0 & 0 & 0\\
    0 & 0 & -1 & 0\\
    0 & 0 & 0 & -1
   \end{matrix}
   \right).
\end{equation}
Thus, the state space of the walk is
$\mathfrak{h}_1\otimes \mathcal{K} = \mathfrak{h}_1\otimes \mathbb{C}^4$, and the evolution operator $\mathsf{W}\!_{\mathfrak{C}}$ takes the form
\begin{equation}\label{eq}
  \mathsf{W}\!_{\mathfrak{C}}
    = (\partial_0^*+\partial_0)\otimes C_0 + (\partial_1^*+\partial_1)\otimes C_1.
\end{equation}
By Definition~\ref{def-weighted-sum-of-coin}, one can get the weighted sums of the coin operator system $\mathfrak{C}$ as follows:
\begin{equation*}
U_{\emptyset}^{(\mathfrak{C})}=
  \left(
    \begin{matrix}
    -1 & 0 & 0 & 0\\
    0 & -1 & 0 & 0\\
    0 & 0 & 1 & 0\\
    0 & 0 & 0 & 1
   \end{matrix}
   \right),\quad
U_{\{0\}}^{(\mathfrak{C})}=
  \left(
    \begin{matrix}
    1 & 0 & 0 & 0\\
    0 & 1 & 0 & 0\\
    0 & 0 & 1 & 0\\
    0 & 0 & 0 & 1
   \end{matrix}
   \right),
\end{equation*}
\begin{equation*}
U_{\{1\}}^{(\mathfrak{C})}=
  \left(
    \begin{matrix}
    -1 & 0 & 0 & 0\\
    0 & -1 & 0 & 0\\
    0 & 0 & -1 & 0\\
    0 & 0 & 0 & -1
   \end{matrix}
   \right),\quad
U_{\{0,1\}}^{(\mathfrak{C})}=
  \left(
    \begin{matrix}
    1 & 0 & 0 & 0\\
    0 & 1 & 0 & 0\\
    0 & 0 & -1 & 0\\
    0 & 0 & 0 & -1
   \end{matrix}
   \right),
\end{equation*}
which are unitary operators (matrices) on $\mathcal{K}=\mathbb{C}^4$.
These operators respectively have eigenvectors $v_{\emptyset}$, $v_{\{0\}}$, $v_{\{1\}}$, $v_{\{0,1\}}$ given by
\begin{equation*}
  v_{\emptyset} =\big(0, 0,\frac{1}{\sqrt{2}}, \frac{1}{\sqrt{2}} \big)^T,\quad v_{\{0\}}= \big(0,0,\frac{1}{\sqrt{2}},\frac{-1}{\sqrt{2}} \big)^T,
\end{equation*}
\begin{equation*}
  v_{\{1\}} = \big(\frac{1}{\sqrt{2}}, \frac{1}{\sqrt{2}}, 0, 0\big)^T,\quad
  v_{\{0,1\}}=\big(\frac{1}{\sqrt{2}},\frac{-1}{\sqrt{2}},0,0\big)^T.
\end{equation*}
Clearly, $\{v_{\emptyset}, v_{\{0\}}, v_{\{1\}}, v_{\{0,1\}}\}$ forms an orthonormal system in $\mathcal{K}=\mathbb{C}^4$. Now put
\begin{equation}\label{eq}
  \Psi = \frac{1}{2}\sum_{\gamma\in \Gamma_1}\widehat{Z}_{\gamma}\otimes v_{\gamma}.
\end{equation}
Then, $\Psi$ is a unit vector in $\mathfrak{h}_1\otimes \mathcal{K}= \mathfrak{h}_1\otimes \mathbb{C}^4$.
Moreover, for each $\tau\in \Gamma_1=\big\{\emptyset, \{0\}, \{1\}, \{0,1\}\big\}$, we have
\begin{equation*}
  u_{\tau} = \sum_{j=0}^{d_{\mathcal{K}}-1} \langle \widehat{Z}_{\tau}\otimes e_j, \Psi\rangle e_j = \frac{1}{2}v_{\tau},
\end{equation*}
which implies that $u_{\tau}$ is an eigenvector of $U_{\tau}^{(\mathfrak{C})}$.
Clearly, for $\tau_1$, $\tau_2\in \Gamma_1$ with $\tau_1\neq \tau_2$, one has
\begin{equation*}
\langle u_{\tau_1},u_{\tau_2} \rangle_{\mathcal{K}}
= \frac{1}{4}\langle v_{\tau_1},v_{\tau_2} \rangle_{\mathcal{K}}=0.
\end{equation*}
This shows that the assumptions made in Theorem~\ref{thr-unifotm-distribution} can also be satisfied.
\end{example}

\section{Conclusion remarks}\label{sec-4}

As is seen, we obtain an alternative description of the $(n+1)$-dimensional hypercube.
And based on the alternative description, we find that the operators $\{\partial_k^* + \partial_k \mid 0\leq k \leq n\}$ behave actually as
the shift operators. This allows us to introduce a quantum walk model on the $(n+1)$-dimensional hypercube with $\{\partial_k^* + \partial_k \mid 0\leq k \leq n\}$ as the shift operators on the position space. We explicitly obtain a formula for calculating the probability distribution of the walk at any time
and establish two limit theorems showing that the averaged probability distribution of the walk even converges to the uniform probability distribution on the
$(n+1)$-dimensional hypercube. Finally, we prove that the walk produces the uniform measure as its stationary measure
on the $(n+1)$-dimensional hypercube provided its initial state satisfies some mild conditions.

From our work, we may come to some observations as follows:
(1) QBN (quantum Bernoulli noises) can provide a useful framework for producing the uniform measure on a general hypercube via a quantum walk.
(2) For a quantum walk on a hypercube, its ``components in the coin space'' can be the determining factors of its probability distributions and evolution behavior.

\section*{Acknowledgements}

The author is extremely grateful to the referees for their valuable comments and suggestions on improvement
of the first version of the present paper.

\section*{Author declarations}

\subsection*{Conflict of interest}

The author has no conflicts to disclose.

\subsection*{Data Availability}

Data sharing is not applicable to this article as no new data were created or analyzed in this study.

\end{document}